\DeclareMathAlphabet\EuFrak{U}{euf}{m}{n}	
\SetMathAlphabet\EuFrak{bold}{U}{euf}{b}{n}	
\newcommand{\cA}{\mathcal{A}}
\newcommand{\cB}{\mathcal{B}}
\newcommand{\cD}{\mathcal{D}}
\newcommand{\cE}{\mathcal{E}}
\newcommand{\cG}{\mathcal{G}}
\newcommand{\cH}{\mathcal{H}}
\newcommand{\cP}{\mathcal{P}}
\newcommand{\cQ}{\mathcal{Q}}
\newcommand{\cS}{\mathcal{S}}
\newcommand{\cU}{\mathcal{U}}
\newcommand{\rC}{\mathrm{C}}
\newcommand{\rF}{\mathrm{F}}
\newcommand{\rL}{\mathrm{L}}
\newcommand{\rd}{\mathrm{d}}
\newcommand{\rg}{\mathrm{g}}
\newcommand{\rh}{\mathrm{h}}
\newcommand{\rr}{\mathrm{r}}
\newcommand{\rt}{\mathrm{t}}
\newcommand{\ru}{\mathrm{u}}
\newcommand{\rw}{\mathrm{w}}
\newcommand{\sL}{\mathscr{L}}
\newcommand{\sP}{\mathscr{P}}
\newcommand{\Si}{\Sigma} 
\newcommand{\bN}{\mathbb{N}}
\newcommand{\bR}{\mathbb{R}}
\newcommand{\bZ}{\mathbb{Z}}
\newcommand{\si}{\sigma}
\newcommand{\eps}{\varepsilon}
\newcommand{\supp}{\mathrm{supp}}
\newcommand{\ad}{\mathrm{ad}}
\title{{\bf QED Representation for the Net of Causal Loops}} 
\author{
\textsc{Fabio Ciolli}$^{a*}$ \ \   \textsc{Giuseppe Ruzzi}$^{a}$\footnote{F.C. and G.R. are supported by PRIN-2010/2011.
F.C. is supported by the ERC Advanced Grant 227458 OACFT ``Operator Algebras and Conformal Field Theory".
} \ \    \textsc{Ezio Vasselli}$^{b}$\\[5pt]
\small{$^{a}$  Dipartimento di Matematica, Universit\`a di Roma ``Tor Vergata'',}\\
\small{Via della Ricerca Scientifica 1, I-00133 Roma,  Italy.}  \\
\small{\texttt{ciolli@mat.uniroma2.it}, \texttt{ruzzi@mat.uniroma2.it}} \\[5pt]
\small{$^{b}$Dipartimento di Matematica, Universit\`a di Roma ``La Sapienza'',}\\
\small{Piazzale Aldo Moro 5, I-00185 Roma, Italy.} \\
\small{\texttt{ezio.vasselli@gmail.com}}\\\\
\emph{Dedicated to Roberto Longo on the occasion of his sixtieth birthday}
}
\date{}
\begin{document}
\maketitle

\begin{abstract}
The present work tackles the existence of \emph{local} gauge symmetries in the setting  of Algebraic Quantum Field Theory (AQFT). 
The net of causal loops, previously introduced by the authors,  is a  model independent construction of a covariant net of local $\rC^*$-algebras on any 4-dimensional globally hyperbolic space-time, aimed to capture some structural properties of any reasonable quantum gauge theory. Representations of this net can be  described by causal and covariant connection systems, and local gauge transformations arise as maps between equivalent  connection systems.  The present paper completes these abstract results,  realizing QED as a representation of the net of causal loops in Minkowski 
space-time. More precisely,  we map the quantum electromagnetic field $F_{\mu\nu}$, \emph{not free in general}, into 
a representation of the net of causal loops and show that the corresponding connection system and local gauge transformations
find a counterpart in terms of $F_{\mu\nu}$. 
\end{abstract}

\newpage

\tableofcontents
\markboth{Contents}{Contents}


  \theoremstyle{plain}
  \newtheorem{definition}{Definition}[section]
  \newtheorem{theorem}[definition]{Theorem}
  \newtheorem{proposition}[definition]{Proposition}
  \newtheorem{corollary}[definition]{Corollary}
  \newtheorem{lemma}[definition]{Lemma}

  \theoremstyle{definition}
  \newtheorem{remark}[definition]{Remark}
    \newtheorem{example}[definition]{Example}

\theoremstyle{definition}
  \newtheorem{ass}{\underline{\textit{Assumption}}}[section]


\numberwithin{equation}{section}

\section{Introduction} 
The Standard Model of elementary particles is a successful physical theory tested by awesome experiments, 
although its constituents like QED or QCD are still out of a rigorous mathematical comprehension.
Yet, similar situations hold for any quantum gauge theory and for the attempts to generalize the achievements of QFT to the realm of general relativity and gravity.  Only some instances, like in  quantum free fields models or  in low dimensional space-time, have  a rigorous  description.
On the other hand, more insights may arrive from an axiomatic approach to the theory of quantum fields.\\ 
\indent
In the present paper we interest in local gauge theory through AQFT, the axiomatic approach describing local observables in terms of nets of abstract $\rC^*$-algebras undergoing few basic physical requirements, see the reference book by R. Haag \cite{Haa}.
Gauge theories have been investigated in this framework mainly in terms of superselection sectors (equivalent classes of representations), the so called DHR analysis.
In a 4-dimensional Minkowski space-time \cite{DHR}, this is a real mathematical-physics outstanding result, grasping the r\^ole played by the DHR-sectors as the dual category of the group of \emph{global} gauge symmetries \cite{DR90}. 
This result disseminated over other cases of superselection theory, as in low-dimensional CFT, e.g.\ \cite{Ka04}, in curved space-time \cite{GLRV, BR}, also in combination with  other non-commutative geometry features \cite{CHL,CM}.
Moreover, authors investigated other aspects of QFT using the language of AQFT, e.g.\ noncommutative space-time \cite{DFR}, the AdS/CFT correspondence \cite{Re00}, locally covariant QFT \cite{BFV}, perturbative QFT and the renormalization group \cite{BDF, FRjlec, FRj}.\\ 
\indent It was clear from the beginning that DHR-sectors  were not tailored for describing charges of electromagnetic type and also,  as pointed out later, for the charges of purely massive theories. The right sectors for the massive theory case were introduced by Buchholz and Fredenhagen \cite{BF}.
Instead, suitable requirements for charges of electromagnetic type have only recently been introduced by  Buchholz and Roberts\footnote{Actually, a long-standing project started by these authors and S. Doplicher.}: for representations in theories with long range forces, like QED, they used the notion of \emph{charge class}, restricting the states of interest to observables localized on a light cone \cite{BdR}.  
In spite of the limitations just outlined, the strength of the DHR-analysis is that both for  BF-sectors  and BR-sectors  
the mathematical machinery of DHR-analysis associating the group of \emph{global} gauge transformations  to sectors applies\footnote{
It is worth mentioning that more works in AQFT dealt with general or peculiar facets of QED: e.g.\ the analysis of the infrared problem and the localization of electromagnetic charge\cite{B82, {BDMRS}}, the interpretation of QED by local constrains \cite{GLl}, lattice approach \cite{KRT}, definition and superselection of free models \cite{BDLR,C09, DL}, or proposal for interacting models by warped convolution \cite{BLS}.
}.
\smallskip 

\indent What remains to be understood is the \emph{local} gauge principle.\smallskip 

The abstract geometric formulation of the DHR analysis given by Roberts in terms of non-Abelian cohomology \cite{RobI, RobII} provide a steering for this quest:
it is motivated by the observation that the charge transporters of the global gauge theory satisfy a 1-cocycle equation. 
According to this,  Roberts also proposed that a  2-cohomology underlies the local gauge principle in  AQFT \cite{Rob77}: 
starting from a unitary 2-cocycle $\rw$ associated with the electromagnetic field,  one should recover a  potential $\ru$, i.e.\ the primitive   of $\rw$: local gauge transformations should arise as equivalence transformations of  $\ru$. Roberts also suggested that the charge transporters of the DHR-analysis should be 
replaced in the QED-case by a ``formal" gauge invariant expression as   
\begin{equation}
\label{uno}
\psi(\partial_0 p) \ e^{i \int_p A_\mu \, dp^\mu} \ \psi(\partial_1 p)^* \ ,
\end{equation}
where $p : \partial_1p \to \partial_0p$ is a path, $A_\mu$ is the electromagnetic potential and $\psi$ is the Dirac field, a sort of 
a finite Mandelstam string \cite{M62}.

\indent Along the ideas of non-Abelian cohomology, a further step has been done in \cite{RR, RRV}  where a geometrical interpretation of the above picture
has been furnished in terms of connections:  the potential $\ru$ can be interpreted as a connection having $\rw$ as curvature, and the transformations of the principal bundle associated with $\ru$ as local gauge transformations. But, from a physical point of view, 
the r\^ole played in gauge theories by observables associated to closed paths (Wilson loops)  suggested a finer comprehension  in terms of a net of local algebras\footnote{
Among the others, we recall two interpretations of observables localized on lines and loops (holonomies), broadly related to the present work.
Ashtekar and Corichi \cite{ACb, ACa} deepened the significance of the topological invariant of the Gau\ss\ linking number and its relation with the Fock-space inner product of the represented fields, \cite{S06} is a recent geometrical survey.
Moreover, the interpretation of electromagnetic charges in terms of the Wilson-'t Hooft operators in TQFT, e.g.\ the Kapustin-Witten lectures on electro-magnetic duality \cite{KW}.
}.

\indent  Actually, this is the motivation of our paper \cite{CRV}.
The \emph{net of causal loops} results to be a combinatorial, model independent construction of a covariant net of local $\rC^*$-algebras 
over any 4-d globally hyperbolic space-time, i.e.\ covariant with respect to the global symmetries of the space-time and respecting  the causality principle.
The generators of these  local algebras are closed paths (loops) associated with a suitable base $K$ of the underlying  space-time. 
Considering $K$ as a partially ordered set with respect to the inclusion, one can define a simplicial set.  
Paths turn out to be compositions of 1-simplices of this simplicial set,
and can be figured out as finite coverings, by elements of $K$, of curves of the space-time. \\
\indent  The very relation with local gauge theories, along the ideas drown by Roberts but also with significative  differences,  arose from the representations of this net. 
Covariant representations  turn out to be causal and covariant  2-cochains $\rw$, i.e.\ unitary valued functions of 2-simplices  localized on the boundary, not necessarily  satisfying a cocycle equation. Furthermore, any such 2-cochain $\rw$ induces, by a procedure akin  to the reconstruction of the primitive 
of  an exact  2-form, a causal and covariant connection system $\ru$: a family of connections, one for each element of $K$, in which causality and covariance arise as properties of the system and not of a single connection. Local gauge transformations naturally arise as maps between equivalent connection systems.\smallskip

The present paper completes the abstract arguments in \cite{CRV}, realizing QED as representation of the net of causal loops in Minkowski 
space-time. More precisely,  we shall map the quantum electromagnetic field $F_{\mu\nu}$, \emph{not free in general}, into 
a representation of the net of causal loops and show that all the abstract notions just described  find a counterpart in terms of $F_{\mu\nu}$.  \smallskip

 The map is constructed using a 2-form $F_{\mu\nu}(y)$ associated with the electromagnetic field $F_{\mu\nu}$ through the convolution of the field 
 with a test function, Section \ref{0B}. Using a contracting homotopy of the Minkowski space-time, we reconstruct the primitive 1-form $A^{z}_{\mu}(y)$, called  the electromagnetic potential form with respect to $z$, the point which the space-time is contracted to. 
The potential form $A^{z}_{\mu}(y)$, closely related to the quantum electromagnetic potential field, is neither local nor covariant.  
However, causality and covariance arise without recurring to the Gupta-Bleuler formalism, as properties of the system $A_\mu:=\{A^{z}_{\mu} \ , \ z\in \bR^4\}$ of potential forms\footnote{A covariant quantum electromagnetic potential was find in \cite{MSY} for the case of infinite string-localized fields.}.  
The notion of local gauge transformation is recast in terms of this system. \\
\indent We then consider, in Section \ref{0C},  the net of causal loops defined over the set of double cones $K$ of the Minkowski space-time. We slightly modify its definition with respect to \cite{CRV} since we take into account two relevant facts: the flatness of the Minkowski space-time and the necessity of  smearing the fields.
The essential effect of this choice is to adjust the definition of the simplicial set, replacing the r\^ole played by  the elements of $K$  by the set of test functions supported within elements of $K$.\\ 
\indent Practically the construction runs as follows: for any 2-simplex  we associate the exponential of the integral of $F_{\mu\nu}(y)$ over the triangular surface underling the 2-simplex. Thanks to the Stokes' theorem this defines a function $\rw^{em}$ over the
2-simplices which has the right localization properties, turning to define a representation of the net of causal loops.
Yet, we can define two connection systems: $\ru^{em}$ which is  obtained from $\rw^{em}$ by using the the combinatorial procedure outlined above; and  $\ru^{pot}$ obtained as the exponential of the line integral of $A_\mu$. 
The key result is that $\ru^{pot}$ is equivalent to $\ru^{em}$ in terms of a local gauge transformation. 
Moreover, we show that local gauge transformations of $A_\mu$ agree with local gauge transformations of the connection systems.\smallskip

Comparing our results with the Roberts' approach recalled above,  the unitary 2-cochain $\rw^{em}$ associated with the 
electromagnetic field is \emph{not}, in general, a 2-cocycle. The relevant physical information is carried by the localization property of $\rw^{em}$,  as in the very philosophy of AQFT. This is enough to recover not a single connection but a \emph{system} of connections in which causality and covariance however arise.
Moreover, it suggests to replace the electromagnetic \emph{potential field} by the electromagnetic \emph{potential form} in the expression (\ref{uno}) for the charge transporters of QED, so  we have 
\begin{equation}
\label{due}
\psi(\partial_0p) \ e^{i\int_p A^{z}_\mu \, dp^\mu} \ \psi(\partial_1p)^* \ .
\end{equation}
Hence we have a direct dependence on the point $z$ of the space-time labelling the potential form.
This parallels what happens in the analysis of sectors associated to charges of electromagnetic type \cite{BdR}, 
in which the theory is developed on the forward light-cone of a single point of the space-time, as outlined above.

\section{Preliminaries} 
\label{0A}
In this section we describe the setting and the notation used in the present paper.  In the first subsection  we recall some facts 
concerning the Minkowski space-time, the Poincar\'e group   and  the notion of causal nets of $\rC^*$-algebras. Some properties 
about quantum electromagnetic field, relevant for the aims of the present paper,  are discussed in the second subsection. Finally,  
we shall deal with the integration of forms in the language  of singular homology.

\subsection{Nets of $\rC^*$-algebras on the Minkowski space-time}

\paragraph{The Minkoswki space-time  and the Poincar\'e group.} We recall some basic properties of the Minkowski space-time $(\bR^4,g)$  and establish some notations.  
We adopt the convention that the metric tensor $g_{\mu\nu}$ has negative signature:  $g_{00}=1$, $g_{11}=g_{22}=g_{33}=-1$ and  $g_{\mu\nu}=0$ if  $\mu\ne \nu$. 
Furthermore, recall that $g_{\mu\nu}=g^{\mu\nu}$ and ${g^\mu}_\nu={g_\mu}^\nu=\delta_{\mu\nu}$.  
We use the contravariant notation to represent the components of element $x$ of the Minkowski space-time:  $x=(x^0,x^1,x^2,x^3)$ or  $x=x^\mu$, with $\mu=0,1,2,3$.  The inner product induced by the metric tensor   is 
\[
x\cdot y:= x^0y^0-x^1y^1- x^2y^2-x^3y^3= x^\mu g_{\mu\nu} x^\nu = x^\mu x_\mu \ ,
\]
where $x_\mu= g_{\mu\nu} x^\nu$ is the covariant representation of $x$. Clearly $x^\mu= g^{\mu\nu} x_\nu$. Denoting  the canonical scalar product of two element $x,y$ of $\bR^4$ by 
$(x,y)$ and by $g$ the matrix associated with the metric tensor $g_{\mu\nu}$,   
the Minkowski inner product can be rewritten as $x\cdot y = (x, gy)$.\\
\indent We shall say that two subsets $X$ and $Y$ of $\bR^4$ are \emph{causally disjoints} if, and only if,  the corresponding elements are spacelike separated. In symbol we write  
\[
 X \perp Y \ \ \iff  \ \ (x-y)^2 <0 \ , \qquad  \forall x\in Y, \ y\in Y \ . 
\]
A Lorentz transformation is a linear transformation $L$ leaving the inner product invariant,
 $(Lx, g Ly) = (x,g y)$:  
in matrix notation  $L^TgL=g$. 
In tensor notation, if $(Lx)^\mu= {L^\mu}_\nu x^\nu$ and $(Lx)_\mu= {L_\mu}^\nu x_\nu$,  where ${L_\mu }^\nu = g_{\mu\alpha} {L^\alpha}_\beta g^{\beta\alpha}$, then 
${L^\alpha}_\mu g_{\alpha\beta} {L^\beta}_\nu  = g_{\mu\nu}$. So
\[
 {(L^{-1})^\nu}_\alpha = g_{\alpha\beta} {L^\beta}_{\rho} g^{\rho\nu} = {L_\alpha}^{\nu} .
 \]  
The \emph{restricted  Lorentz group}  $\sL^\uparrow_+$  is  the subgroup of the Lorentz  transformations whose matrices ${L^\mu}_\nu$
have positive determinant and ${L^0}_0\geq 1$. The \emph{Poincar\'e group} $\sP^\uparrow_+$ is the semi-direct product $\bR^4\rtimes  \sL^\uparrow_+$ with composition law  defined as 
\[
(x,L)\, (x',L'):= (x+Lx', LL') \ . 
\]
According to this relation, $(-L^{-1}x, L^{-1})$ is the inverse of $(x,L)$. To economize notation, sometimes we shall denote an element of the Poincar\'e group 
by $P:=(x,L)$, so when $Y \subset \bR^4$ or $y \in \bR^4$ we shall write the action as 
\[
PY = x + LY \ \ , \ \ Py = x + Ly \ .
\]
\paragraph{Nets of $\rC^*$-algebras.} The mathematical description of local observables in AQFT is given 
in terms of nets of $\rC^*$-algebras.  We shall focus on the case of the Minkowski space-time and refer the reader to \cite{RV, CRV} for more general situations. 
Let $K$ denote the set of \emph{double cones} of the Minkowski space-time \cite{Haa}. Double cones form a base of the topology of  $\bR^4$ which is stable under the action of the Poincar\'e group and   upward directed under inclusion. 
By a \emph{net} of $\rC^*$-algebras over the Minkowski space-time we shall mean an inclusion preserving (\emph{isotonous})  correspondence 
$\cA:K\ni o\to \cA_o\subseteq \cA(\bR^4)$, i.e.\ 
\[
o_1\subseteq o_2 \ \ \Rightarrow \ \ \cA_{o_1}\subseteq \cA_{o_2} \ ,  
\]
associating a $\rC^*$-subalgebra $\cA_o$ of  a fixed target $\rC^*$-algebra $\cA(\bR^4)$ to any double cone $o$.
The net $\cA$ is said to be \emph{causal} whenever 
\[
o_1\perp o_2 \ \ \Rightarrow \ \ [\cA_{o_1},\cA_{o_2}]=0 \ , 
\]  
and it is said to be \emph{covariant} if there is action of the Poincar\'e group $\alpha:\sP^\uparrow_+\to\mathbf{aut}\cA(\bR^4)$ such that 
\[
\alpha_P\circ \cA =\cA\circ \alpha_P \ ,\qquad P\in \sP^\uparrow_+ \ . 
\]
In the following we shall denote a causal and covariant net of $\rC^*$-algebras over the Minkowski space-time by $(\cA,\alpha)_K$.

\subsection{The quantum electromagnetic field.}
\label{0Aa}
The quantum electromagnetic field is defined by the following data. 
\begin{itemize}
\item{}  A Hilbert space $\cH$ carrying a unitary representation $U$  of $\sP^\uparrow_+$ having a unique 
invariant (\emph{vacuum}) vector $\Omega$ and satisfying the spectrum condition.
\item{} An operator  valued distribution $f\mapsto F_{\mu\nu}(f)$ assigning to any real test function $f\in\cS(\bR^4)$ an essentially self-adjoint operator $F_{\mu\nu}(f)$ of $\cH$ defined on a 
dense \emph{domain} $\cD\subset \cH$ containing  the vacuum and  such that    $ F_{\mu\nu}(f)\cD\subseteq \cD$ for any test function $f$. We furthermore assume that the essential self-adjointness property is conserved under the contraction over 1- and 2-tensor taking value in $\cS(\bR^4)$.   
\item{} \emph{Covariance.} The representation $U$ leaves the domain $\cD$ invariant and 
\begin{equation}
\label{0A:1}
 U(x,L)  F_{\mu\nu}(f) U(x,L)^*  \ = \ {{L^{-1}}_\mu}^\alpha  {{L^{-1}}_\nu}^\beta \, F_{\alpha\beta}(f_{(x,L)}) 
 \ , \qquad f\in\cS(\bR^4) \ , 
\end{equation}
for any $(x,L)\in\sP^\uparrow_+$, where $f_{(x,L)}:= f\circ (x,L)^{-1}$. 
\item \emph{(strong) Causality.}   Causality of the field reads as  
\[
  \left[ F_{\mu\nu}(f),  F_{\mu\nu}(g)\right]=0 \ \ , \ \qquad \supp(f)\perp \supp(g) \ , 
\] 
where the commutation relation is intended to hold on $\cD$, and the symbol $\perp$ stands for causal disjointness. However in the present paper we need a stronger relation,
\begin{equation}
\label{0A:2}
\left[ \exp(i F_{\mu\nu}(f)),\exp(i F_{\alpha\beta}(g))\right]=0   \ \ , \ \qquad \supp(f)\perp \supp(g) \ , 
\end{equation}
where, to ease notation,  we are denoting $F_{\mu\nu}(f)$ and its closure by the same symbol. 
\item (\emph{Field equations}) The field is \emph{antisymmetric}  
\begin{equation}
\label{0A:3}
F_{\mu\nu} = - F_{\nu\mu} \ , 
\end{equation}  
and satisfies the  \emph{Maxwell equations}: the \emph{first} 
\begin{equation}
\label{0A:4}
\partial_{\si} F_{\mu\nu} + \partial_\mu  F_{\nu\si} + \partial_\nu  F_{\si\mu} = 0  \ , 
\end{equation}
and the \emph{second} 
\begin{equation}
\label{0A:5}
\partial^\mu F_{\mu\nu} = J_\nu \ .  
\end{equation}
\end{itemize}
Let us briefly comment the terminology used in the above definitions. As usual, the term \emph{operator valued distribution} means that the mapping $f\mapsto (\phi,F_{\mu\nu}(f)\psi)$ is a tempered distribution, i.e.\ a continuous linear functional on the Frechet space $\cS(\bR^4)$ for any $\phi,\psi\in\cD$. The first and the second Maxwell equation hold in the weak sense i.e.\  $\partial_\alpha F_{\mu\nu}(f) := - F_{\mu\nu}(\partial_\alpha f)$. The explicit construction of $(\cH,F,U)$ can be made, in the free case, by using the Fock spaces \cite{Fre10} or the Wightman reconstruction theorem \cite{Ste}.\\
\indent We discuss the assumptions about the second Maxwell equation and the strong form of causality. As observed we are not assuming that 
the electromagnetic field is free, that is we allow $J_\nu \ne 0$. This is possible since the second Maxwell equation does not enter directly in the definitions and constructions of the present paper. 
Concerning the strong form of causality, this is necessary since using the exponential of the fields we will constructs representations of the net of \emph{causal} loops. This strong form of causality is verified by the free electromagnetic field \cite[Section 5.3]{Ste}  and, also, by some interacting fields \cite{GJ}. As an example  
one may consider the free electromagnetic field $F^0_{\mu\nu}$ (so $\partial^\mu F^0_{\mu\nu}=0$) coupled with an external classical current $j_\nu$ \cite{Fre10}:  
one  considers  a bounded closed covariant 2-form $\phi_{\mu\nu}$, i.e.\  fulfilling   
\[
\phi_{\mu\nu}\circ P^{-1}= {L^{-1}_{\mu}}^\alpha {L^{-1}_{\nu}}^\beta \phi_{\alpha\beta} \ , \qquad P\in \sP^\uparrow_+ \ . 
\]
Then,  the field $F_{\mu\nu}$, defined by 
\[
 F_{\mu\nu}(f):=  F^0_{\mu\nu}(f)+ \int \phi_{\mu\nu}(x)f(x)\, d^4x \ , \qquad f\in\cS(\bR^4) \ , 
\] 
satisfies the above properties with  current $j_\nu:=\partial^\mu\phi_{\mu\nu}$.

\subsection{Singular simplices and integration}
\label{0Ab}
The suited language for dealing with the Stokes' theorem is that of the algebraic topology. We start by introducing the simplicial set 
of singular piecewise smooth simplices in $\bR^4$. Given the  \emph{standard $n$-simplex}
\[
\Delta_n \ := \ \left\{ (t_1,t_2,\ldots,t_n)\in\bR^n \ \big| \ \ t_i\geq 0,\ \sum^n_{i=1} t_i\leq1\right\} \ , 
\] 
a \emph{singular piecewise smooth $n$-simplex} $\chi$ is a piecewise smooth maps $\chi:\Delta_n\to \bR^4$.  The vertices of $\chi$ are the images of the vertices of $\Delta_n$: so, given the canonical base $\{ e_k \}$ of $\bR^n$ and $i \in \{ 0,\ldots,n \}$, the $i$-\emph{vertex} of $\chi$ is defined by $\chi(e_i)$ when $i \in \{ 1,\ldots,n \}$, and by $\chi(0)$ when $i=0$.
The order of the vertices endows $\chi$ with a natural orientation.
We say that an $n$-simplex $\chi'$ has \emph{opposite orientation} with respect to $\chi$ whenever $\chi'=\chi\circ T$, where $T:\bR^n\to\bR^n$ is an affine transformation making an odd permutation of the vertices of $\Delta_n$. 
Denoting the set of singular piecewise smooth $n$-simplices by $\Si_n(\bR^4)$, the \emph{face maps}   $\partial_i:\Si_n(\bR^4)\to \Si_{n-1}(\bR^4)$  are defined by  
\begin{equation}
\label{0A:5a}
\partial_i\chi (t_1,\ldots, t_{n-1}):= \left\{
\begin{array}{ll}
\chi (1- \sum^{n-1}_{i=1}t_i,\, t_1,t_2,\ldots,t_{n-1}) \ ,  & i=0  \ , \\
\chi (t_1,\cdots, t_{i-1},0,t_i,\dots,t_{n-1}) \ , & i >0 \ . 
\end{array}
\right.
\end{equation}
One can easily see that the collection $\Si_*(\bR^4) = \{ \Si_n(\bR^4) , n\in\bN \}$, is a simplicial set.

\paragraph{Singular Homology.} We denote the $\bZ$-module of \emph{singular $n^{th}$-chains}  by $C_n(\bR^4)$: this is the set of finite formal linear combinations $\sum_i m_i \chi_i$ with $m_i\in\bZ$ and $\chi_i\in \Si_n(\bR^4)$. The \emph{boundary} of $\chi \in \Si_{n+1}(\bR^4)$ is the $n$-chain
\begin{equation}
\label{0A:6}
\partial \chi \ = \ \sum^{n+1}_{i=0} (-1)^i \, \partial_i \chi  \ ,
\end{equation}
and an $n$-\emph{cycle} is an $n$-simplex $\chi$ such that $\partial\chi=0$. 
Now, as well known, since  $\partial\circ\partial=0$ it turns out that \emph{the set $B_n(\bR^4)$ of boundaries} of  $C_{n+1}(\bR^4)$ is a submodule of \emph{the set $Z_n(\bR^4)$ of $n$-cycles}, and $H_n(\bR^4):=Z_n(\bR^4)/B_n(\bR^4)$ is the $n$-th module of singular homology of $\bR^4$, which vanishes for $n>0$. The proof relies on the existence of  contracting homotopies  for  $\bR^4$. The easiest example is the 
\emph{cone construction} which  associates to any $z \in \bR^4$ a map 
$h^z : C_n(\bR^4)\to C_{n+1}(\bR^4)$
such that
\begin{equation}
\label{eq.h}
\partial\circ h^z +h^z\circ \partial \ = \ id_{C_n(\bR^4)} \ .
\end{equation}
Explicitly, $h^z$ is defined on $\chi \in \Si_n(\bR^4) \subset C_n(\bR^4)$ by
\begin{equation}
\label{0A:8a}
 h^z\chi(t_1,\ldots,t_n, t_{n+1}) \ := \ 
 \left(\sum^{n+1}_{i=1} t_i\right)\chi\left( \frac{t_2}{\sum^{n+1}_{i=1} t_i}, \dots, \frac{t_{n+1}}{\sum^{n+1}_{i=1} t_i}\right) +\left(1-\sum^{n+1}_i t_i\right) z 
 \ , 
\end{equation}
and extended by linearity on $n$-chains. In particular $\partial_0 h^z\chi= \chi$ and $\partial_i h^z\chi= h^z\partial_{i-1}\chi$ for $i>0$.  
This implies that any $\chi \in C_1(\bR^4)$ which is a closed curve ($\chi(0)=\chi(1)$) is the boundary 
of the  2-simplex $h^z\chi$.

\paragraph{Integration.} Given a smooth $n$-form $\omega= \sum_{\alpha_1\cdots \alpha_n} \omega_{\alpha_1\cdots \alpha_n} dx^{\alpha_1}\wedge\cdots \wedge dx^{\alpha_n}$,  
where the indices $\alpha_1,\ldots,\alpha_n$ vary independently in $0,1,2,3$,  the integral over an $n$-simplex is defined by  
\begin{equation}
\label{0A:7}
 \int_\chi \omega \ := \ \int_{\Delta_n} \omega_{\alpha_1\cdots\alpha_n}(\chi(t))\, \chi^{\alpha_1\cdots\alpha_n} (t) \,  d^n t \ , 
 \qquad \chi\in\Si_n(\bR^4) \ , 
\end{equation}
where $\chi^{\alpha_1\cdots\alpha_n}(t)$ is the involved Jacobian.
The Stokes' theorem hence reads as follows: given an $(n-1)$-form $\omega$ and $\chi \in \Si_n(\bR^4)$, then
\begin{equation}
\label{0A:8}
 \int_\chi \rd \omega \ = \ 
 \int_{\partial \chi}  \omega \ = \
 \sum^n_{i=0} (-1)^n \int_{\partial_i\chi} \omega \ , \qquad 
 \chi\in \Si_n(\bR^4) \ . 
\end{equation}
If $\varphi = \sum_j m_j \chi_j \in C_n(\bR^4)$ and $\omega$ is an $n$-form, then 
$\int_\varphi \omega  := \sum_j m_j \int_{\chi_j} \omega$ and the Stokes' theorem extends to chains by linearity. \\
\indent In what follows  we shall deal mainly with singular $0$-, $1$-, $2$- simplices in $\bR^4$:  a $0$-simplex is a point $x \in \bR^4$; a 1-simplex is a piecewise smooth curve $\gamma$;  a 2-simplex is a piecewise smooth surface $\si$. Note that the boundary of a curve $\gamma$ is the $0$-chain $\partial\gamma=\partial_0\gamma -\partial_1\gamma$ that in terms of vertices is $\gamma(1)-\gamma(0)$.  The boundary of  a surface $\si$ is the 1-chain $\partial\si= \partial_0\si-\partial_1\si+\partial_2\si$. 
According to the above definition of orientation, we note that there is only one curve  having opposite orientation of a given curve $\gamma$; this is the  curve 
$\bar{\gamma}(t)=\gamma(1-t)$, $t\in\Delta_1$, that we call the \emph{opposite} of $\gamma$. Note that $\partial\bar\gamma=-\partial\gamma$. 
Instead in the case of surfaces the opposite is not unique: as a convention, we call the \emph{opposite} of $\si$ the surface $\bar\si:= \si\circ T$, where  
$T:=\left(\begin{smallmatrix}
0&1\\ 1&0
\end{smallmatrix} \right)$
is written in the canonical base of $\bR^2$. The boundary of $\bar\si$ results to be the 1-chain 
\begin{equation}
\label{0A:9}
\partial\bar\si \ = \
\partial_0\bar\si-\partial_1\bar\si+\partial_2\bar\si \ = \
-\partial_0\si -\partial_2\si+ \partial_1\si \ = \
-\partial \sigma \ . 
\end{equation}
Concerning the integration over singular simplices, we observe that the infinitesimal line element of $\gamma \in \Si_1(\bR^4)$ is $\dot\gamma^\mu(t) dt$; the infinitesimal surface element of $\si \in \Si_2(\bR^4)$ is $\si^{\mu\nu}(t) d^2t$, where
\begin{equation}
\label{0A:10}
\si^{\mu\nu}(t) \ = \ 
\left(\frac{\partial\si^\mu}{\partial t_1} \frac{ \partial \si^\nu }{ \partial t_2 } -
\frac{ \partial \si^\mu }{ \partial t_2 }\frac{ \partial \si^\nu }{ \partial t_1 }\right)(t)
\ \ , \ \  t \in \Delta_2 \ .
\end{equation}
Note that $\dot\gamma^\mu$ and $\partial \si^\mu / \partial t_k$ may have discontinuity points, nevertheless any of them is bounded.

\section{Loops, surfaces and the potential system} 
\label{0B}

In this section we define the integral of the electromagnetic field over a 2-simplex,
and prove that this operator is covariant in a suitable sense and localized on the boundary. 
To this end, we shall use the fact that the electromagnetic field defines a closed, exact, 2-form and the Stokes' theorem. 
Furthermore we shall consider a collection of primitives of the above 2-form, indexed by the Minkowski space-time, and analyze its properties.

For what follows we shall use the language of algebraic topology, in which the integrations of smeared fields over surfaces and over curves are treated on the same footing and where the Stokes' theorem applies in a more general form. To this end, we shall use smearing functions to generalize the notion of $n$-simplex.
To keep contact with the geometrical intuition we shall mainly work with $n$-simplices rather than $n$-chains, but our results easily extend to these latter.

\subsection{Smearing simplices} 
\label{0AB}

Both for the purpose of smearing the fields and for treating abstract simplices given by subsets of the space-time, as in \cite{CRV}, 
we refine the notion of piecewise smooth $n$-simplex. 
We define the set of (singular, piecewise smooth) \emph{smearing $n$-simplices} by 
\begin{equation}
\label{0A:11}
\Si_n(\bR^4,\cS) \ := \ \{ (\chi,f) : \chi \in \Si_n(\bR^4) \ , \ f \in \cS(\bR^4) \, , \, \supp(f) \ni 0  \} \ , 
\end{equation}
and call $f$ the \emph{smearing function} of $(\chi,f)$. The  Poincar\'e group acts on $\Si_n(\bR^4,\cS)$ by   
\begin{equation}
\label{0A:12}
P(\chi,f) \ := \ ( P\chi , f_{L} ) \ \ , \qquad 
P=(x,L) \in \sP^\uparrow_+  \ , \ (\chi,f) \in \Si_n(\bR^4,\cS) \ ,
\end{equation}
where $(P\chi(t))^\mu := x^\mu+ {L^\mu}_\nu \chi^\nu(t)$, with $t\in\Delta_n$, and $f_{L}\equiv f_{(0,L)}$. Notice that translations do not act on the smearing  function $f$, so $\supp(f_{L}) \ni 0$.\\ 
\indent The $\bZ$-module of \emph{smearing $n$-chain} $C_n(\bR^4,\cS)$ is defined as the set of  finite formal linear combinations of smearing $n$-simplices,
\begin{equation}
\label{0A:12a}
C_n(\bR^4,\cS) \ := \ 
\left\{ \sum_i m_i (\chi_i,f_i) \ | \  m_i\in\bZ \ , \  (\chi_i,f_i)\in\Si_{n}(\bR^4,S) \right\} \ ,
\end{equation}
and the Poincar\'e action extends on  $C_n(\bR^4,\cS)$ correspondingly. \\
\indent The face maps lift, naturally,  from 
$\Si_n(\bR^4)$ to $\Si_n(\bR^4,\cS)$  by defining $\partial_i (\chi,f):=(\partial_i\chi, f)$ for any  $i$. So the \emph{boundary} of a smearing $n$-simplex is the $(n-1)$-chain  
\[
 \partial(\chi,f) \ := \
 %
 %
 (\partial \chi,f) \ . 
\]
Similarly to Subsection \ref{0Ab} we define   $H_n(\bR^4,\cS):=Z_n(\bR^4,\cS)/B_n(\bR^4,\cS)$ that also vanish for $n>0$. 
Notice that also in this case $n$-boundaries coincide with $n$-cycles and that the  Poincar\'e action leaves them invariant.\\ 
\indent The reason why we introduce smearing simplices  relies on the following observation: to any smearing $n$-simplex $(\chi,f)$ there corresponds an $n$-form $\chi[f]$ defined by  
\begin{equation}
\label{0A:13}
\chi[f]^{\alpha_1\cdots\alpha_n}(x) \ := \ \int_{\Delta_n} f(x-\chi(t))\, \chi^{\alpha_1\cdots\alpha_n}(t)\, d^nt 
\ \ , \ \
x \in \bR^4
\ . 
\end{equation}
This $n$-form plays the r\^ole of  the $n$-volume element for the integration of a quantum field over an $n$-simplex.   
It is easily seen that $\chi[f]^{\alpha_1\cdots\alpha_n}\in\cS(\bR^4)$  and that  the partial Riemann sums defining the integral (\ref{0A:13}) 
converge in the sense of the topology of $\cS(\bR^4)$ \cite{Rudin}. It is natural to define the \emph{support} of a smearing $n$-simplex as the support of the corresponding $n$-form, i.e.\  
\begin{equation}
\label{0A:14}
|(\chi,f)| \ := \ \supp(\chi[f])\ , \qquad (\chi,f)\in \Si_n(\bR^4,\cS) \ . 
\end{equation}
It is easy to verify that the support is Poincar\'e covariant, $|P(\chi,f)|= P|(\chi,f)|$ for $P\in\sP^\uparrow_+$,  and that 
if $\supp(f)$ is contained in the open ball $B_\eps(0) \subset \bR^4$, $\eps > 0$, then 
\begin{equation}
\label{0A:15}
\supp (\chi[f]) \ \subseteq \  B_\eps(\chi) \ := \ \{ x \in \bR^4 : d(x,\chi(\Delta_n)) < \eps \} \ , 
\end{equation}
in fact $x \notin B_\eps(\chi)$ implies $f( x-\chi(t) ) = 0$ for all $t \in \Delta_n$. Finally we observe that the above reasoning extends to $n$-chains, as follows: 
given an $n$-chain $\varphi=\sum_i m_i(\chi_i, f_i)$, we set
\begin{equation}
\label{0A:16}
[\varphi]^{\alpha_1\cdots\alpha_n}(x) \ := \
\sum_i m_i \, \chi_i [f_i]^{\alpha_1\cdots\alpha_n}(x)
\ \ , \ \
x \in \bR^4 \ ,
\end{equation}
which is still an $n$-form with coefficients in $\cS(\bR^4)$. 
The support of the $n$-chain is defined as the support of the associated $n$-form, and the property of covariance easily generalizes.

\subsection{The electromagnetic field 2-form}
\label{0Ba}
As already said, we want to integrate the electromagnetic field on a 2-simplex. 
The idea is to integrate on the given 2-simplex the convolution of $F_{\mu\nu}$ by a test function. To this end, for any $f\in\cS(\bR^4)$  define 
\[
f_y(x):=  f(x-y) \ ,  \qquad x,y\in\bR^4,
\]
and observe that $f_y$ is a test function as well for any $y$.  Since $y\mapsto f_y$ fits the topology of Schwartz functions, 
the mapping $y\mapsto F_{\mu\nu}(f_y)$  is smooth in the sense of distributions, i.e.\  the functions 
$y\mapsto (\phi,F_{\mu\nu}(f_y)\psi)$ are smooth for any $\phi,\psi\in\cD$ and for any $\mu,\nu$ (see \cite{Rudin}).  
Furthermore, covariance implies
\begin{equation}
\label{0B:1}
F_{\mu\nu}(f_y) \ = \ U(y) F_{\mu\nu}(f) U(y)^* \ , \qquad y\in\bR^4  \ , 
\end{equation}
where, with an abuse of notation, $U(y)$ stands for $U(y,\mathbbm{1})$. 
Finally, we note that $\partial_{y^\rho} f_y = - (\partial_\rho f)_y$ and this implies
$\partial_{y^\rho} F_{\mu\nu}(f_y) = F_{\mu\nu}((\partial_\rho f)_y)$.
So
\begin{align*}
\partial_{y^\rho} F_{\mu\nu}(f_y) + \partial_{y^\nu} F_{\rho\mu}(f_y)  + \partial_{y^\mu} F_{\nu\rho}(f_y) & =
   F_{\mu\nu}( (\partial_{\rho} f)_y) +  F_{\rho\mu}((\partial_\nu  f)_y)   + F_{\nu\rho}((\partial_\mu  f)_y) \\
& =  U(y) \big( F_{\mu\nu}(\partial_{\rho} f) +  F_{\rho\mu}(\partial_\nu  f)   + F_{\nu\rho}(\partial_\mu  f) \big) U(y)^* \\
& = 0 \ ,
\end{align*}
and this leads to 
\begin{equation}
\label{0B:2}
\partial_{y^\rho} F_{\mu\nu}(f_y) + \partial_{y^\nu} F_{\rho\mu}(f_y)   + \partial_{y^\mu} F_{\nu\rho}(f_y) =0 \ ,
\end{equation}
that is, $y\mapsto F_{\mu\nu}(f_y)$ is a closed 2-form in the sense of distributions.


\paragraph{The surface integral of $F_{\mu\nu}$.}  The rough idea is to define our integral on $\si\in\Si_2(\bR^4)$ as 
\begin{equation}
\label{0B:2o}
\int_\si F(f_\si) \, d\si \ := \ \frac{1}{2}\, \int_{\Delta_2} F_{\mu\nu}(f_{\si(t)})\,  \si^{\mu\nu}(t) \, d^2t \ ,
\end{equation}
where $f \in \cS$ and the factor $1/2$ derives from the fact that $F_{\mu\nu}$ is antisymmetric and $\mu,\nu$ varies independently in $\{ 0,1,2,3 \}$.
This is a formal expression since the integration of unbounded operators is involved. To deal with this problem we consider the smearing surface $(\si,f) \in \Si_2(\bR^4,\cS)$ and the corresponding 2-form
\begin{equation}
\label{0B:2c}
 \sigma[f]^{\mu \nu}(x)  \ := \  \int_{\Delta_2} f(x-\si(t)) \, \si^{\mu\nu}(t) \, d^2t \ , \qquad    x \in \bR^4 \ ,
\end{equation}
defined in accordance with (\ref{0A:13}).
We have seen that any $\sigma[f]^{\mu \nu}$ is in $\cS(\bR^4)$ and that the partial Riemann sums defining the integral (\ref{0B:2c}) converge in the Schwartz topology; furthermore, $\supp (\sigma[f]) \subseteq \  B_\eps(\sigma)$ whenever  $\supp(f)\subseteq B_\eps(0)$,  see (\ref{0A:15}).\smallskip 

On these grounds we define
\begin{equation}
\label{0B:4}
F\left<\si,f\right> \ := \ \frac{1}{2}\ F_{\mu\nu}(\sigma[f]^{\mu \nu}) \ \ , \qquad  (\si,f) \in \Si_2(\bR^4,\cS) \ .
\end{equation}
Next results show the properties of this operator. In particular, the following Lemma \ref{0B:5}$(ii)$ illustrates how (\ref{0B:4}) does fit 
the formal expression (\ref{0B:2o}).
%
\begin{lemma}
\label{0B:5} 
Let $(\si,f) \in \Si_2(\bR^4,\cS)$. Then   $F\left<\si,f\right>$ is an essentially self-adjoint operator defined on $\cD$ such that 
$F\left<\si,f\right>\cD\subseteq \cD$, and:
\begin{itemize}
\item[(i)]   $U(P) F\left<\si,f\right> U(P)^* \ = \ F\left<P(\si,f)\right>$,   \ \ $P\in \sP^\uparrow_+$;  
\item[(ii)]  $ (\phi,F\left<\si,f\right>\psi)  \ = \ 1/2 \, \int_{\Delta_2} (\phi,F_{\mu\nu}(f_{\si(t)})\psi) \, \si^{\mu \nu}(t) \ d^2t$,  \ \  $\phi,\psi\in\cD$; 
\item[(iii)] if  $|(\si',f')|\perp |(\si,f)|$, then  $\big[\exp ( iF\left<\si,f\right> ),\exp ( iF\left<\si',f'\right> ) \big]  = 0$,
             where $|\cdot |$ denotes the support (\ref{0A:14}) of the smearing surface.
\end{itemize}              
\end{lemma}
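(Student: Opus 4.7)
The plan is to reduce each assertion to the corresponding property of the underlying field $F_{\mu\nu}$ stated in Subsection~\ref{0Aa}, using that every component $\si[f]^{\mu\nu}$ is a Schwartz function on $\bR^4$. Essential self-adjointness of $F\langle\si,f\rangle$ on $\cD$ and the invariance $F\langle\si,f\rangle\cD\subseteq\cD$ then follow directly from the axiomatic assumption in Subsection~\ref{0Aa} that these properties are preserved under contraction of the field with tensor-valued Schwartz test functions.

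For item~(ii), the plan is to unpack the definition (\ref{0B:4}) and exchange the distribution pairing $(\phi,F_{\mu\nu}(\cdot)\psi)$ with the integral (\ref{0B:2c}). This exchange is legitimate because, as recalled in Subsection~\ref{0AB}, the partial Riemann sums defining $\si[f]^{\mu\nu}$ converge in the Schwartz topology, while the pairing is continuous on $\cS(\bR^4)$. For item~(i), I would apply the field covariance (\ref{0A:1}) to $F_{\mu\nu}(\si[f]^{\mu\nu})$, obtaining
\[
 U(P)\, F\langle\si,f\rangle\, U(P)^* \ = \ \tfrac{1}{2}\, {L^{-1}_\mu}^\alpha {L^{-1}_\nu}^\beta\, F_{\alpha\beta}\!\left((\si[f]^{\mu\nu})_P\right),
\]
and then reduce the identification with $F\langle P(\si,f)\rangle$ to the pointwise identity of Schwartz functions ${L^{-1}_\mu}^\alpha {L^{-1}_\nu}^\beta (\si[f]^{\mu\nu})_P = (P\si)[f_L]^{\alpha\beta}$. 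Under the change of variable $y\mapsto P^{-1}y = L^{-1}(y-x)$, combined with $f_L = f\circ L^{-1}$, this reduces in turn to the tensorial transformation of the surface Jacobian $(P\si)^{\alpha\beta} = {L^\alpha}_\mu {L^\beta}_\nu \si^{\mu\nu}$ --- immediate from $(P\si)^\alpha = x^\alpha + {L^\alpha}_\mu\si^\mu$ --- together with the standard Lorentz identity ${(L^{-1})_\mu}^\alpha = {L^\alpha}_\mu$ recorded in the preliminaries.

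For item~(iii), I would use the inclusion (\ref{0A:15}) to get $\supp(\si[f]^{\mu\nu}) \subseteq |(\si,f)|$ for every $\mu,\nu$, and analogously for $(\si',f')$; the hypothesis $|(\si,f)|\perp|(\si',f')|$ then forces $\supp(\si[f]^{\mu\nu}) \perp \supp(\si'[f']^{\alpha\beta})$ for every index combination. The strong causality (\ref{0A:2}), read at the level of the contracted self-adjoint operators in the spirit of the tensor-contraction assumption of Subsection~\ref{0Aa}, delivers the asserted commutativity of the exponentials. The principal subtlety I expect lies precisely here: (\ref{0A:2}) as stated concerns a single $F_{\mu\nu}(h)$ versus $F_{\alpha\beta}(g)$, whereas $F\langle\si,f\rangle = \tfrac{1}{2}\sum_{\mu\nu} F_{\mu\nu}(\si[f]^{\mu\nu})$ is a sum of components which need not commute at coincident points, so the exponential of the sum does not factor as a product of component exponentials. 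One therefore has to interpret (\ref{0A:2}) as an instance of a broader causality statement for operators obtained by contracting the field with tensor-valued Schwartz functions, in parallel with the preservation of essential self-adjointness under such contractions already assumed in Subsection~\ref{0Aa}.
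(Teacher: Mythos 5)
Your proposal is correct and follows essentially the same route as the paper: essential self-adjointness and domain invariance from the tensor-contraction assumption, (ii) from convergence of the Riemann sums in the Schwartz topology, (i) from field covariance plus the transformation law of the surface Jacobian and the identity ${{L^{-1}}_\mu}^\alpha={L^\alpha}_\mu$, and (iii) from the support estimate and strong causality. The subtlety you flag in (iii) --- that (\ref{0A:2}) is stated componentwise while $F\left<\si,f\right>$ is a contraction over indices --- is real but is silently glossed over in the paper, which simply cites (\ref{0B:4}) and (\ref{0A:2}); your reading of the assumption as covering contracted operators is the intended one.
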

\begin{proof}
The first properties follow by definition. 
$(i)$ Let $P = (x,A) \in \sP^\uparrow_+$.
According to (\ref{0A:1}) we have $U(P) F\left<\si,f\right> U(P)^* \ = (1/2)\ {{L^{-1}}_\mu}^\alpha\, {{L^{-1}}_\nu}^\beta \, F_{\alpha\beta}(\sigma[f]^{\mu \nu}\circ P^{-1})$, 
where 
\begin{align*}
\sigma[f]^{\mu \nu}\circ P^{-1}(y) & = 
\int_{\Delta_2} f(L^{-1}y - L^{-1}x-\si(t) ) \, \si^{\mu \nu}(t) \ d^2t \\ & =
\int_{\Delta_2} f_L(y- x-L\si(t) ) \, \si^{\mu \nu}(t) \ d^2t =
\int_{\Delta_2} (f_L)_{x+L\si(t)}(y)  \, \si^{\mu \nu}(t) \ d^2t \ . 
\end{align*}
Since ${{L^{-1}}_\mu}^\alpha= {L^\alpha}_\mu$ (see Section \ref{0A}), we get 
\begin{align*}
2\cdot U(P) &  F\left<\si,f\right> U(P)^*  = \\ 
& = \int_{\Delta_2}  F_{\alpha\beta}((f_L)_{x+L\si(t)}) \cdot {{L^{-1}}_\mu}^\alpha\, {{L^{-1}}_\nu}^\beta\, \si^{\mu \nu}(t) \ d^2t 
 =  \int_{\Delta_2}  F_{\alpha\beta}((f_L)_{x+L\si(t)}) \cdot (L\si)^{\alpha\beta} (t) \ d^2t  \\
 & =  \int_{\Delta_2}  F_{\alpha\beta}((f_L)_{x+L\si(t)}) \cdot (x+L\si)^{\alpha\beta} (t) \ d^2t \\
& = F_{\alpha\beta}({P\sigma}[f_L]^{\alpha\beta}) \ = \ 2\cdot F\left<P(\si,f)\right> \ ,
\end{align*}
completing the proof. 
$(ii)$ follows from the observation that the map
$y \mapsto \int_{\Delta_2} f_{\si(t)}(y) \si^{\mu\nu}(t) d^2t$ 
is limit in the topology of $\cS(\bR^4)$ of finite Riemann sums, so applying the distribution 
$(\phi,F_{\mu\nu}(\cdot)\psi)$ 
we can pass under the sign of integral.
$(iii)$ follows by the definition (\ref{0B:4}) and strong causality (\ref{0A:2}).
\end{proof}

\paragraph{Cycle phases and the Stokes' theorem.}  We now construct a causal and covariant map from smearing 1-boundaries into the unitary group of the vacuum Hilbert space.
%
\begin{lemma}
\label{lem.0B:6} 
If $(\si,f),(\si',f) \in \Si_2(\bR^4,\cS)$ have the same  boundary then 
$F\left<\si,f\right>  =  F\left<\si',f\right>$.
\end{lemma}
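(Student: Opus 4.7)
The plan is to reduce the operator identity to a scalar identity between matrix elements, and then prove this by applying Stokes' theorem to a suitable closed 2-form on $\bR^4$.

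First, I would fix arbitrary $\phi,\psi\in\cD$ and define the smooth scalar 2-form $\omega^{\phi,\psi}$ on $\bR^4$ by
\[
  \omega^{\phi,\psi}_{\mu\nu}(y) \ := \ \bigl(\phi, F_{\mu\nu}(f_y)\psi\bigr) \ , \qquad y\in \bR^4 \ .
\]
This is smooth because $y\mapsto f_y$ is smooth in the Schwartz topology (and $F_{\mu\nu}(\cdot)$ is a distribution in the weak sense on $\cD$), and it is antisymmetric in $\mu,\nu$ by (\ref{0A:3}). By (\ref{0B:2}) it satisfies $\partial_\rho \omega^{\phi,\psi}_{\mu\nu}+\partial_\nu \omega^{\phi,\psi}_{\rho\mu}+\partial_\mu \omega^{\phi,\psi}_{\nu\rho}=0$, i.e.\ $\rd\omega^{\phi,\psi}=0$, so $\omega^{\phi,\psi}$ is a closed smooth 2-form on $\bR^4$.

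Next, I would rewrite Lemma \ref{0B:5}$(ii)$ in the language of singular integration (\ref{0A:7}). Comparing the two formulas, one sees that for any $(\tau,f)\in\Si_2(\bR^4,\cS)$,
\[
  \bigl(\phi,F\left<\tau,f\right>\psi\bigr) \ = \ \tfrac{1}{2}\,\int_\tau \omega^{\phi,\psi} \ .
\]
By linearity this extends to 2-chains. Hence it suffices to prove that $\int_{\sigma-\sigma'}\omega^{\phi,\psi}=0$.

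The hypothesis $\partial(\sigma,f)=\partial(\sigma',f)$ amounts, by the definition of the face maps on smearing simplices, to $\partial\sigma=\partial\sigma'$ as singular 1-chains, so the 2-chain $\sigma-\sigma'$ is a 2-cycle in $C_2(\bR^4)$. Applying the cone construction (\ref{eq.h}) with any basepoint $z\in\bR^4$, the 3-chain $\xi:=h^z(\sigma-\sigma')\in C_3(\bR^4)$ satisfies $\partial\xi+h^z\partial(\sigma-\sigma')=\sigma-\sigma'$, whence $\sigma-\sigma'=\partial\xi$. Stokes' theorem (\ref{0A:8}), extended by linearity to chains, then gives
\[
  \int_{\sigma-\sigma'}\omega^{\phi,\psi} \ = \ \int_{\partial\xi}\omega^{\phi,\psi} \ = \ \int_\xi \rd\omega^{\phi,\psi} \ = \ 0 \ ,
\]
the last equality by closedness of $\omega^{\phi,\psi}$. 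Thus $\bigl(\phi,F\left<\sigma,f\right>\psi\bigr)=\bigl(\phi,F\left<\sigma',f\right>\psi\bigr)$ for all $\phi,\psi\in\cD$, and since both operators are essentially self-adjoint on the common invariant domain $\cD$, the two operators coincide.

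I don't expect a serious obstacle: the only technical point is making sure the exchange between the weak limits defining $F_{\mu\nu}(\sigma[f]^{\mu\nu})$ and the Riemann sums over $\Delta_2$ is legitimate, but Lemma \ref{0B:5}$(ii)$ already encapsulates this. All the real content sits in the closedness identity (\ref{0B:2})---which is nothing more than the first Maxwell equation (\ref{0A:4}) transported to the smeared test function $f_y$---combined with the triviality of $H_2(\bR^4)$ furnished by the cone homotopy $h^z$.
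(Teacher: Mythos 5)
Your proof is correct, and it reaches the conclusion by a route dual to the paper's. Both arguments reduce the operator identity to the scalar statement that the closed 2-form $\omega_{\mu\nu}(y)=(\phi,F_{\mu\nu}(f_y)\psi)$ has equal integrals over $\si$ and $\si'$, and both rest on Stokes' theorem plus the topological triviality of $\bR^4$. The paper goes \emph{down} in degree: it uses exactness of $\omega$ (i.e.\ $H^2_{\mathrm{dR}}(\bR^4)=0$) to produce a primitive 1-form $\tau$ with $\rd\tau=\omega$ and computes $\int_\si\omega=\int_{\partial\si}\tau=\int_{\partial\si'}\tau=\int_{\si'}\omega$. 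You go \emph{up} in degree: you observe that $\si-\si'$ is a 2-cycle, hence by the homotopy identity (\ref{eq.h}) the boundary of the explicit 3-chain $h^z(\si-\si')$, and conclude $\int_{\si-\si'}\omega=\int_{h^z(\si-\si')}\rd\omega=0$ by closedness. The two arguments are equivalent in content, but yours is fully constructive at this stage of the paper, since the cone homotopy $h^z$ is written out explicitly in (\ref{0A:8a}), whereas the primitive $\tau$ used in the paper's proof is only asserted to exist here (an explicit formula appears later, in (\ref{0Bb:0})). Your identification of the matrix element with $\tfrac{1}{2}\int_\tau\omega^{\phi,\psi}$ via Lemma \ref{0B:5}$(ii)$ and the convention (\ref{0A:7}) is sound (the factor $1/2$ cancels from both sides in any case), and the final passage from equality of sesquilinear forms on $\cD\times\cD$ to equality of operators by density is exactly the paper's closing step.
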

\begin{proof}
Equation (\ref{0B:2}) implies that for any $\phi,\psi\in\cD$ the 2-form 
\[
\omega_{\mu\nu}(y) \ := \ (\phi,F_{\mu\nu}(f_y)\psi) \, , \quad y \in \bR^4 \ ,
\]
is closed, so it is exact and we can find a 1-form $\tau$ such that $d\tau=\omega$. 
Applying Lemma \ref{0B:5}.$(iii)$ and the Stokes' theorem (\ref{0A:8}) we get 
\begin{align*} 
(\varphi,F\left<\si,f\right>\psi)& = \frac{1}{2} 
\int_{\Delta_2} (\varphi,F_{\mu\nu}(f_{\si(t)})\psi) \, \si^{\mu \nu}(t) \, d^2t =  \frac{1}{2}\int_{\Delta_2} \omega_{\mu\nu} (\si(t))\, \si^{\mu \nu}(t) \, d^2t  
 =  
\int_\si \omega = \int_{\partial\si} \tau \\
&  = \int_{\partial\si'} \tau = \ \int_{\si'} \omega  =  
\frac{1}{2} \int_{\Delta_2} \omega_{\mu\nu} (\si'(t))\, \si'^{\mu \nu}(t) \ d^2t   =  
\frac{1}{2}\int_{\Delta_2} (\varphi,F_{\mu\nu}(f_{\si'(t)}) \psi) \si'_{\mu \nu}(t) \ d^2t \\
& = 
(\varphi,F\left<\si',f\right>\psi) \ , 
\end{align*}
and the proof follows by density of $\cD$ in $\cH$. 
\end{proof}
After a preliminary remark we illustrate a consequence of Lemma \ref{lem.0B:6}. By the cone construction (\ref{0A:8a}),
for any smearing closed curve $(\gamma,f)$, $\gamma(0)=\gamma(1)$, there is a 
smearing surface $h^z(\gamma,f):= (h^z\gamma,f)$ such that  $\partial h^z\gamma=\gamma$ for some $z\in\bR^4$.  
This suggests the following notation: 
we denote the set of \emph{closed smearing curves} by  $\partial\Si_2(\bR^4,\cS)$: clearly, this is a subset of $Z_1(\bR^4,\cS)$. 
\begin{theorem}[Cycle phases]
\label{thm.0B:6}
The quantum electromagnetic field induces the map
\[
w : \, \, \partial\Si_2(\bR^4,\cS) \to \cU\cH
\ \ , \ \ 
w(\ell,f) \, := \, \exp(iF\left<\si,f\right>)
\ , \
\ell = \partial \si
\ ,
\]
which is causal and covariant in the sense that
\[
[ \, w(\ell_1,f_1) \, , \, w(\ell_2,f_2) \, ] \ = \ 0 
\ \ \ , \ \ \
U(P) \, w(\ell,f) \, U(P)^* \ = \ w (P(\ell,f)) \ ,
\]
for any $|(\ell_1,f_1)|\subseteq o_1$, $|(\ell_2,f_2)| \subseteq  o_2$,
$o_1 \perp o_2 \in K$ and $P \in \sP^\uparrow_+$. 
\end{theorem}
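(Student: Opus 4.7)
The plan is to establish the three assertions—well-definedness of $w$, causality, and covariance—by reducing each to the properties of $F\langle\si,f\rangle$ already proved in Lemmas \ref{0B:5} and \ref{lem.0B:6}.

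First, for well-definedness: given $(\ell,f) \in \partial\Si_2(\bR^4,\cS)$, the cone construction (\ref{0A:8a}) applied to any $z \in \bR^4$ produces a smearing $2$-simplex $(h^z\ell,f)$ with $\partial(h^z\ell) = \ell$, so at least one filling $\si$ exists. Lemma \ref{0B:5} guarantees that $F\langle\si,f\rangle$ is essentially self-adjoint on $\cD$, so $w(\ell,f) := \exp(iF\langle\si,f\rangle)$ is a well-defined unitary, and independence of the filling $\si$ is exactly the statement of Lemma \ref{lem.0B:6}.

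Covariance is the most direct step. Since the face maps commute with the Poincar\'e action on $\Si_*(\bR^4,\cS)$, one has $\partial(P\si) = P(\partial\si) = P\ell$ for any filling $\si$ of $\ell$, so that $(P\si, f_L)$ is a filling of $P(\ell,f) = (P\ell, f_L)$. Using the covariance of $F\langle\,\cdot\,\rangle$ from Lemma \ref{0B:5}$(i)$,
\[
U(P)\, w(\ell,f)\, U(P)^* = \exp\bigl(i U(P) F\langle\si,f\rangle U(P)^*\bigr) = \exp\bigl(i F\langle P(\si,f)\rangle\bigr) = w(P(\ell,f)).
\]

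The hard step is causality, whose core difficulty lies in choosing fillings $\si_i$ of $\ell_i$ whose supports $|(\si_i,f_i)|$ lie in the original double cones $o_i$; once this is achieved, $|(\si_1,f_1)| \perp |(\si_2,f_2)|$ and Lemma \ref{0B:5}$(iii)$ closes the argument. The plan is to use the cone construction with cone points $z_i$ chosen in the interior of $o_i$: exploiting convexity of the double cone $o_i$, every point of the triangular surface $h^{z_i}\ell_i(\Delta_2)$ lies in $o_i$ whenever $\ell_i(\Delta_1) \cup \{z_i\}$ does, and applying convexity again to $\eps_i$-neighborhoods (with $\eps_i$ the radius of $\supp(f_i)$), the smeared support $|(\si_i,f_i)| \subseteq B_{\eps_i}(h^{z_i}\ell_i(\Delta_2))$ by (\ref{0A:15}) remains within $o_i$. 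The main obstacle is controlling this thickening: one must verify from the hypothesis $|(\ell_i,f_i)| \subseteq o_i$ together with estimate (\ref{0A:15}) that $\eps_i$ can be chosen small enough—and $z_i$ deep enough in $o_i$—so that neither the cone surface nor its $\eps_i$-thickening escapes $o_i$. This is the genuinely geometric step, resting entirely on the convexity of double cones.
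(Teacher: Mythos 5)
Your proposal is correct and follows essentially the same route as the paper's proof: existence of a filling via the cone construction, well-definedness from Lemma \ref{lem.0B:6}, covariance from Lemma \ref{0B:5}$(i)$, and causality by choosing the cone apex inside the convex double cone so that Lemma \ref{0B:5}$(iii)$ applies. Your only addition is to spell out the convexity argument for the $\eps$-thickening of the cone surface, which the paper leaves implicit; that elaboration is sound (each point of $B_\eps(h^z\ell)$ is a convex combination of points of $B_\eps(z)$ and $B_\eps(\ell)$), though note that $\eps$ is fixed by $\supp(f)$ rather than chosen.
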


\begin{proof}
By definition, for any closed curve $\ell$  there is a surface $\si$   such that 
$\ell = \partial \si$.
By Lemma \ref{lem.0B:6} we have that $F\left<\si,f\right>$ is independent of the choice of $\si$ with $\partial \si = \ell$, 
so $w$ is well-defined.
Covariance follows by Lemma \ref{0B:5}$(i)$.
To prove causality we consider $|(\ell,f)| \subseteq o$ and note that, since $o$ is convex, we can find $z \in o$ such that 
$h^z(\ell,f)$ is supported within $o$. So the proof follows by Lemma \ref{0B:5}$(iii)$.
\end{proof}

\begin{remark}
\label{0B:7}
Some observations about this theorem are in order. 
\begin{enumerate}
\item Causality of $w$ holds in a more general form. If $(\ell_1,f_1)$ and $(\ell_2,f_2)$ are causally disjoint smearing closed curves not linked together, 
then $[ w(\ell_1,f_1),w(\ell_2,f_2)]=0$. In fact in this case it is always possible to find 
causally disjoint smearing surfaces having $\ell_1$, $\ell_2$ as boundaries, 
\emph{via} suitable cone constructions (see Theorem \ref{0Cc:22} and following remarks). 
\item The above theorem generalizes straightforwardly to generic $1$-cycles $Z_1(\bR^4,\cS)$.
\item The proof of the previous theorem is based on the properties of $F_{\mu\nu}$ 
being strongly causal (\ref{0A:2}), covariant (\ref{0A:1}), and closed as a 2--form (\ref{0A:4}). 
So we do not expect limitations from considering quantum fields with values in a generic Lie algebra,
or defined over generic space-times with trivial de Rham cohomology.
\item Assuming that there is some potential $A$ such that 
$\partial_\mu A_\nu - \partial_\nu A_\mu = F_{\mu\nu}$,
we may apply the Stokes' theorem and write in heuristic terms
\begin{equation}
\label{eq.w}
w(\ell,f) \ = \ \exp \, i \oint_\ell A(f)  \ \ , \ \  (\ell,f) \in \partial\Si_2(\bR^4,\cS) \ , 
\end{equation}
where $A(f)$ is a smearing of $A$. This yields the link with the notion of Wilson loop used in quantum gauge theories.
The rigorous version of the above equality shall be proved in Proposition \ref{prop:0Bb6}.
 
\end{enumerate}
\end{remark}


\subsection{The potential system}
\label{0Bb} 
We have observed that the electromagnetic field defines an exact 2-form of $\bR^4$. 
Using the classical formula for the reconstruction of the primitive of an exact 2-form we shall define the (electromagnetic) potential 1-form,
depending on the origin used to perform the integration; varying the origin in $\bR^4$ gives a system of potential forms that turns out to be causal and covariant.   
We analyze the properties of these operators and define their line integral.

\paragraph{The electromagnetic 1-form.}  We want to study the primitives of the closed 2-form defined by the electromagnetic field.
To this end we may use the cone construction to obtain directly the integral of a primitive on a smearing curve, 
however we prefer to follow  a bottom-up approach first finding the primitive 1-forms. 
The equivalence between these two formulations shall be shown at the end of this section.\smallskip

If  $\omega$ is a closed 2-form on $\bR^4$, then the formula  
\begin{equation}
\label{0Bb:0}
\omega^z_\mu(y) \ = \ \int^1_0  t\,  (y-z)^\alpha \, \omega_{\mu\alpha}(z+t(y-z)) \,  dt \ 
\ \ , \ \
y \in \bR^4 \ ,  
\end{equation}
gives a primitive of $\omega$ for any $z\in\bR^4$, as can be verified by an elementary computation.
Indeed this is the argument used to prove the Poincar\'e Lemma, that is usually presented with
the choice $z=0$. As we shall see soon, the additional 
degree of freedom given by $z \in \bR^4$ is necessary to make explicit the covariance and the causality of the potential.
On this grounds, given a test function $f$ the formal definition of our quantum potential form is 
\[
A^z_\mu(y,f) \ := \ \int^1_0 t(y-z)^\alpha F_{\mu\alpha}(f_{z+t(y-z)}) \, dt 
\ \ , \qquad
y \in \bR^4 \ .
\]
We show below that this formula holds in the sense of sesquilinear forms. 
To give the rigorous definition we observe that for any smearing point $(y,f)\in\Si_0(\bR^4,\cS)$ there corresponds a 1-tensor $y^z_f$, 
defined as   
\begin{equation}
\label{0Bb:1}
y^{z,\alpha}_f(x) \ := \
\int^1_0 (y-z)^\alpha t f_{z+t(y-z)}(x) \, dt    \ ,\qquad  y \in\bR^4 \ ,  
\end{equation}
where, as usual, the components $y^{z,\alpha}_f$ take values  in $\cS(\bR^4)$ and are limit, in the topology of $\cS(\bR^4)$, of the partial Riemann sums defining the integral. If $\supp(f)\subseteq B_\eps(0)$ then 
\begin{equation}
\label{0Bb:2}
y^z_f 
\ \subseteq \ 
B_\eps(r_{z,y}) \ = \ \{ x\in\bR^4 \ | \ d(x,r_{z,y})<\eps\} \ ,
\end{equation}
where $r_{z,y}:=\{ z+t(y-z)\in\bR^4 \ | \ t \in [0,1] \}$.

\begin{definition}
\label{0Bb:3}
For any $z \in \bR^4$, we call the \textbf{(electromagnetic) potential 1-form based on $z$}
the operator-valued map
\begin{equation}
\label{0Bb:3a}
A^z_\mu(y,f) \ := \ F_{\mu\alpha}(y^{z,\alpha}_f) 
\ , \qquad (y,f) \in \Si_0(\bR^4,\cS) \ . 
\end{equation}
The point $z$ is called \textbf{pole}. 
The family $A_\mu = \{ A^z_\mu \}_{z}$ is called the \textbf{potential system}.
\end{definition}
Some basic properties, like the fact that $A^z_\mu(y,f)$ is a weak primitive of the electromagnetic field, 
are shown in the next lemma.
%
\begin{lemma}
\label{0Bb:4}
Let $ (y,f) \in \Si_0(\bR^4,\cS)$ and $z\in\bR^4$. 
Then $A^z_\mu(y,f)$ is an essentially  self-adjoint operator on $\cD$ such that
$A^z_\mu(y,f)\cD\subseteq\cD$.
The mapping  $y\mapsto (\phi,A^z_\mu(y,f)\psi)$ is smooth  and the relations 
\begin{itemize}
 \item[(i)] $(\phi,A^z_\mu(y,f)\psi) \ = \ \int^1_0  t(y-z)^\alpha (\phi,F_{\mu\alpha}(f_{z+t(y-z)})\psi) \, dt$,     
 \item[(ii)] $(\phi,F_{\mu\nu}(f_y)\psi) \ = \ \partial_{y^\mu} (\phi,A^z_\nu(y,f)\psi) - \partial_{y^\nu} (\phi, A^z_\mu(y,f)\psi)$,
\end{itemize}
hold for any $\phi,\psi\in\cD$.
\end{lemma}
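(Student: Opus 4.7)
The plan has four steps, one per assertion of the lemma, all of which reduce to unpacking the definition $A^z_\mu(y,f) = F_{\mu\alpha}(y^{z,\alpha}_f)$ and exploiting the fact that $\omega_{\mu\nu}(y) := (\phi, F_{\mu\nu}(f_y)\psi)$ is a smooth closed scalar-valued $2$-form.

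First, essential self-adjointness and invariance of $\cD$: by the discussion around (0Bb:1)--(0Bb:2), the components $y^{z,\alpha}_f$ lie in $\cS(\bR^4)$, so the standing hypothesis that essential self-adjointness is preserved under contraction with $\cS(\bR^4)$-valued $1$-tensors yields essential self-adjointness of $A^z_\mu(y,f)$ on $\cD$; the invariance $A^z_\mu(y,f)\cD\subseteq\cD$ is inherited directly from $F_{\mu\nu}(h)\cD\subseteq\cD$.

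For (i), I would argue via the Schwartz-topology convergence of the Riemann sums $S_N := \sum_i (y-z)^\alpha t_i f_{z+t_i(y-z)}\,\Delta t_i$ to $y^{z,\alpha}_f$ noted right after (0Bb:1). Since $h\mapsto (\phi,F_{\mu\alpha}(h)\psi)$ is a tempered distribution, hence continuous on $\cS(\bR^4)$, application of this distribution commutes with the limit of the sums, which is exactly the claimed integral identity. Smoothness of $y\mapsto (\phi,A^z_\mu(y,f)\psi)$ then follows by differentiating under the integral in the formula just proved: the integrand $t(y-z)^\alpha(\phi,F_{\mu\alpha}(f_{z+t(y-z)})\psi)$ is smooth in $y$, because $y\mapsto f_y$ is smooth as an $\cS(\bR^4)$-valued map and tempered distributions are smooth on such curves, and its $y$-derivatives admit local uniform bounds on $t\in[0,1]$ that dominate the integral.

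For (ii), put $\omega_{\mu\nu}(y) := (\phi, F_{\mu\nu}(f_y)\psi)$; this is smooth by the previous step applied to $F$ itself, and closed by (0B:2). The formula in (i) then states that $(\phi,A^z_\mu(y,f)\psi)$ equals the Poincar\'e-lemma homotopy $\omega^z_\mu(y) = \int_0^1 t(y-z)^\alpha \omega_{\mu\alpha}(z+t(y-z))\,dt$ of (0Bb:0). Differentiating under the integral as above, antisymmetrizing in $\mu,\nu$, and using $\omega_{\nu\mu} = -\omega_{\mu\nu}$ together with the closedness identity $\partial_\alpha\omega_{\mu\nu} + \partial_\mu\omega_{\nu\alpha} + \partial_\nu\omega_{\alpha\mu} = 0$, I will rewrite the integrand as a total $t$-derivative of $t^2\omega_{\mu\nu}(z+t(y-z))$; the fundamental theorem of calculus then evaluates it to $\omega_{\mu\nu}(y) = (\phi,F_{\mu\nu}(f_y)\psi)$, up to the sign convention used in the statement of (ii).

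The main obstacle is not conceptual but technical: justifying the exchanges of integral with distribution and with derivative rigorously, which boils down to controlling the Schwartz-seminorms of the integrands uniformly in the auxiliary parameter $t\in[0,1]$. Once this is in hand, the Poincar\'e-lemma manipulation in the last step is a routine index computation.
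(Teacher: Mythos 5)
Your proof is correct and follows essentially the same route as the paper's: part (i) by Schwartz-topology convergence of the Riemann sums defining $y^{z,\alpha}_f$ together with continuity of the tempered distribution (the paper simply points back to the argument of Lemma \ref{0B:5}$(ii)$), and part (ii) by identifying $(\phi,A^z_\mu(y,f)\psi)$ with the Poincar\'e-lemma homotopy (\ref{0Bb:0}) of the closed form $\omega_{\mu\nu}(y)=(\phi,F_{\mu\nu}(f_y)\psi)$ and running the standard closedness/total-$t$-derivative computation, which the paper only cites. Your hedge on the sign is warranted --- with the index order $\omega_{\mu\alpha}$ in (\ref{0Bb:0}) the antisymmetrized integrand is $-\tfrac{d}{dt}\bigl[t^2\omega_{\mu\nu}(z+t(y-z))\bigr]$, so one lands on $\partial_\nu\tau_\mu-\partial_\mu\tau_\nu=\omega_{\mu\nu}$, a harmless convention mismatch already latent in the paper --- and otherwise you are merely supplying details the paper leaves implicit.
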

\begin{proof}
The facts that $A^z_\mu(y,f)$ is essentially self-adjoint and a linear map of $\cD$ follow by definition.
The equation $(i)$ follows by the same argument used to prove Lemma \ref{0B:5}$(ii)$, that is, 
convergence of the integral defining the form $y^z_f$ in the sense of Schwartz topology.
To prove equation $(ii)$ we define
$\tau_\mu(y) := (\phi,A^z_\mu(y,f)\psi)$, $y \in \bR^4$,
and using $(i)$ we find
\[
\partial_\mu \tau_\nu (y)
\ = \ 
\frac{\partial}{\partial y^\mu} \int^1_0  t(y-z)^\alpha (\phi,F_{\nu\alpha}(f_{z+t(y-z)})\psi) \, dt \ .
\]
The argument used to prove that (\ref{0Bb:0}) is a primitive now applies, 
so $\tau$ is a primitive of the 2-form $y\mapsto (\phi,F_{\mu\nu}(f_y)\psi)$ as desired. 
\end{proof}
We point out two basic features of QED that are reflected on the potential 1-form: 
$A^z_\mu(y,f)$ is neither localized nor covariant with respect to $f$. 
To be more precise $A^z_\mu(y,f)$ is not localized on $\supp(f)$ but, according to 
Definition \ref{0Bb:3}, on the support of the tensor  $y^z_f$, whose localization is estimated by (\ref{0Bb:2}). 
Concerning covariance we have the following result. 
\begin{lemma}
\label{0Bb:5}
For any $ (y,f) \in \Si_0(\bR^4,\cS)$ and $P = (a,L)\in\sP^\uparrow_+$ we have 
\[
U(P) \, A^z_\mu(y,f) \, U(P)^*  \ = \  {{L^{-1}}_\mu}^\delta \, A^{Pz}_{\delta}(P(y,f)) \ . 
\]
\end{lemma}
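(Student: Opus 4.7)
The plan is to unfold the definition of $A^z_\mu(y,f)$, apply covariance of $F_{\mu\nu}$, and analyze how the $1$-tensor $y^z_f$ transforms under the Poincar\'e action, following the pattern of Lemma \ref{0B:5}$(i)$. Writing $P = (a,L)$, Definition \ref{0Bb:3} gives $A^z_\mu(y,f) = F_{\mu\alpha}(y^{z,\alpha}_f)$, so the covariance relation (\ref{0A:1}) yields
\[
U(P)\,A^z_\mu(y,f)\,U(P)^* \ = \ {{L^{-1}}_\mu}^\rho\, {{L^{-1}}_\alpha}^\beta\, F_{\rho\beta}(y^{z,\alpha}_f\circ P^{-1}).
\]
The task then reduces to identifying $y^{z,\alpha}_f\circ P^{-1}$ as a scalar (matrix-valued) multiple of the $1$-tensor based at $Pz$ associated with $(Py,f_L)$.

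To do so, an elementary change of variable shows that, for every $u\in\bR^4$, $(f_u\circ P^{-1})(x) = f(L^{-1}x - L^{-1}a - u) = f_L(x - Pu) = (f_L)_{Pu}(x)$. Applied to $u=z+t(y-z)$, together with $Pu = Pz+t(Py-Pz)$ and the identity $(y-z)^\alpha = {(L^{-1})^\alpha}_\gamma(Py-Pz)^\gamma$ (which follows from $Py-Pz=L(y-z)$), this gives
\[
y^{z,\alpha}_f\circ P^{-1} \ = \ {(L^{-1})^\alpha}_\gamma \int^1_0 t\,(Py-Pz)^\gamma\, (f_L)_{Pz+t(Py-Pz)}\, dt \ = \ {(L^{-1})^\alpha}_\gamma\, (Py)^{Pz,\gamma}_{f_L}.
\]
As in Lemma \ref{0B:5}$(i)$, the convergence of the defining Riemann sums in the Schwartz topology validates the replacement inside $F_{\rho\beta}(\cdot)$.

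Plugging this into the first display, the two factors of $L^{-1}$ collapse to a Kronecker delta. Using the identity ${{L^{-1}}_\alpha}^\beta = {L^\beta}_\alpha$ recorded in Section \ref{0A} together with $LL^{-1}=\mathbbm{1}$, one has ${{L^{-1}}_\alpha}^\beta\, {(L^{-1})^\alpha}_\gamma = {L^\beta}_\alpha\, {(L^{-1})^\alpha}_\gamma = \delta^\beta_\gamma$, whence
\[
U(P)\,A^z_\mu(y,f)\,U(P)^* \ = \ {{L^{-1}}_\mu}^\rho\, F_{\rho\gamma}((Py)^{Pz,\gamma}_{f_L}) \ = \ {{L^{-1}}_\mu}^\delta\, A^{Pz}_\delta(P(y,f)),
\]
recalling $P(y,f)=(Py,f_L)$ from (\ref{0A:12}). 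The only non-routine ingredient is the index bookkeeping, namely checking that the $L^{-1}$ coming from the covariance of $F_{\mu\nu}$ and the $L^{-1}$ produced by re-expressing the factor $(y-z)^\alpha$ in $P$-transformed coordinates are contracted with the correct indices; everything else is either an elementary change of variable or a distributional argument of the same kind already used in Section \ref{0B}.
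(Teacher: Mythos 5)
Your proof is correct and follows essentially the same route as the paper's: unfold Definition \ref{0Bb:3}, apply the covariance relation (\ref{0A:1}), show by a change of variable that $y^{z,\alpha}_f\circ P^{-1} = {(L^{-1})^\alpha}_\gamma\,(Py)^{Pz,\gamma}_{f_L}$, and contract the two inverse Lorentz matrices to a Kronecker delta via ${{L^{-1}}_\alpha}^\beta={L^\beta}_\alpha$. Your organization of the change of variable through the general identity $(f_u\circ P^{-1})=(f_L)_{Pu}$ is slightly cleaner than the paper's in-line computation, but it is the same argument.
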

\begin{proof}
According to (\ref{0A:1}) we have 
\[  
U(P)\, A^z_\mu(y,f) \, U(P)^* \ = \ 
U(P) \, F_{\mu\alpha}(y^{z,\alpha}_f))\,  U(P)^* \ = \
{{L^{-1}}_\mu}^\delta  \,{{L^{-1}}_\alpha}^\beta \, F_{\delta\beta}(y^{z,\alpha}_f\circ P^{-1}) \ .
\]
Using (\ref{0Bb:1}) we compute
\begin{align*}
y^{z,\alpha}_f  & \circ P^{-1}(x)  =\\
& = y^{z,\alpha}_f (L^{-1}(x-a)) 
= \int^1_0 (y-z)^\alpha t \cdot  f( L^{-1}x - L^{-1}a - z -t(y-z) ) \, dt \\
& = \int^1_0 (y-z)^\alpha t \cdot  f_L ( x-a - L(z -t(y-z))) \, dt 
= \int^1_0 (y-z)^\alpha t \cdot  f_L ( x-Lz -t(Ly-Lz) ) \, dt \\
& = {(L^{-1})^\alpha}_\beta \int^1_0 (Py-Pz)^\beta t \cdot  f_L ( x-Pz -t(Py-Pz) ) \, dt \\
& = {(L^{-1})^\alpha}_\beta \,\,  (Py)^{Pz,\alpha}_{f_L}(x) \ .
\end{align*}  
So according to the definition (\ref{0Bb:1}) and  using the relation  ${{L^{-1}}_\alpha}^\beta= {L^\beta}_\alpha$ (see Section \ref{0A}), we have 
\begin{align*} 
U(P)\,  A^z_\mu(y,f) \,U(P)^* 
& = {{L^{-1}_\mu}^\delta} \,{L^{-1}_\alpha}^\beta \, F_{\delta\beta}(y^{z,\alpha}_f  \circ P^{-1}) \\
& = {{L^{-1}_\mu}^\delta} \, {L^{-1}_\alpha}^\beta \, ({L^{-1})^\alpha}_\nu \, 
    F_{\delta\beta}\big( (Py)^{Pz,\alpha}_{f_L}\big) = {{L^{-1}_\mu}^\delta} \,\, F_{\delta\beta}\big( (Py)^{Pz,\alpha}_{f_L}\big)  \\
& = {{L^{-1}_\mu}^\delta} \, A^{Pz}_{\delta}(P(y,f)) \ , 
\end{align*}
completing the proof.
\end{proof}
This result shows that the electromagnetic potential form is not covariant. In particular we see that a Poincar\'e transformation \emph{affects} the pole $z$ with respect to which the potential form is reconstructed. 
In other words, we have covariance only if we consider the potential system $A_\mu$ 
and not a single potential 1-form. We shall return on this point in Section \ref{0C}.

\paragraph{The line integral of the  potential form.} The last step is to analyze the integral of the quantum potential form on a curve $\gamma$. 
The idea is, as usual, to find a suitable test function giving a rigorous sense to the formal expression
\begin{equation}
\label{0Bb:00}
\int_\gamma A^z_\mu(\gamma,f) \, d\gamma^\mu 
\ := \ 
\int^1_0 A^z_\mu(\gamma(s),f) \, \dot{\gamma}^\mu(s) \, ds
\ .
\end{equation}
In analogy to the case of the potential form, we pick a point $z\in\bR^4$ and associate a 2-tensor $\gamma^z_f$ to 
any smearing curve $(\gamma,f)\in\Si_1(\bR^4,\cS)$, by defining 
\begin{equation}
\label{0Bb:10}
\gamma^{z,\mu\nu}_f(x) \ := \ 
\int^1_0 \left(\int^1_0  (\gamma(s)-z)^\mu t f_{z+t(\gamma(s)-z)}(x) dt\right) \dot{\gamma}^\nu(s) \, ds 
\ \ , \ \
x \in \bR^4
\ .
\end{equation}
Again, the coefficients $\gamma^{z,\mu\nu}$ are in $\cS(\bR^4)$ and are limit of the partial Riemann sums defining the integral. Notice that  $\supp(f) \subseteq B_\eps(0)$ implies
\begin{equation}
\label{0Bb:11}
\supp(\gamma^z_f) 
\ \subseteq \
\{ x\in\bR^4 \ | \ d(x,z+t(\gamma(s)-z)) < \eps \ , \  t,s\in[0,1]\}  \ . 
\end{equation}
In words, $\gamma^z_f$ is supported in the envelope between $\supp(f)$ 
and the set spanned by $z+t(\gamma(s)-z)$ as $t,s$ vary in $[0,1]$.\smallskip

After these preliminary observations, the \emph{line integral of the potential form} is the operator
\begin{equation}
\label{0Bb:12}
A^z\left<\gamma,f\right> \ := \ F_{\mu\nu}(\gamma^{z,\mu\nu}_f)
\  , \qquad
(\gamma,f) \in \Si_1(\bR^4,\cS) \ .
\end{equation}
We give some properties of the above line integral. 
We start by observing that $A^z\left<\gamma,f\right>$ is an essentially self-adjoint operator 
defined on $\cD$ such that $A^z\left<\gamma,f\right>\cD\subseteq \cD$. 
The fact that (\ref{0Bb:00}) holds in the weak sense is established in the following result:
\begin{lemma}
\label{lem.0Bb5}
For any $\phi,\psi\in \cD$, $z \in \bR^4$ and $ (\gamma,f) \in \Si_1(\bR^4,\cS)$, we have:
\begin{itemize}
\item[(i)] $(\phi, A^z\left<\gamma,f\right>\psi) \ = \ \int^1_0 (\phi,A^z_\nu(\gamma(s),f)\psi) \, \dot{\gamma}^\nu(s) \, ds$;  
\item[(ii)] $(\phi, A^z\left<\gamma,f\right>\psi) \ = \
\int^1_0\int^1_0  (\gamma(s)-z)^\mu  \, t \, (\phi, F_{\mu\nu}( f_{ z+t(\gamma(s)-z) })\psi)  
                  \dot{\gamma}^\nu(s) \, ds dt$.
\end{itemize}
\end{lemma}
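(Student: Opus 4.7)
The plan is to prove part (ii) first by justifying interchange of the integral with the operator-valued distribution, and then obtain part (i) as a corollary via Lemma \ref{0Bb:4}(i).

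For (ii), I start from the definition $A^z\left<\gamma,f\right> := F_{\mu\nu}(\gamma^{z,\mu\nu}_f)$ together with the explicit expression (\ref{0Bb:10}) for $\gamma^{z,\mu\nu}_f$. The key input, already used throughout Section \ref{0B} (for instance in the proofs of Lemma \ref{0B:5}(ii) and Lemma \ref{0Bb:4}(i)), is that the partial Riemann sums approximating the iterated integral
\[
\gamma^{z,\mu\nu}_f \ = \ \int_0^1\!\!\int_0^1 (\gamma(s)-z)^\mu\, t\, f_{z+t(\gamma(s)-z)}\, \dot\gamma^\nu(s)\, dt\,ds
\]
converge in the Schwartz topology of $\cS(\bR^4)$, since $s\mapsto \dot\gamma(s)$ is bounded and piecewise smooth on $[0,1]$ (see Subsection \ref{0Ab}) and $(s,t,y)\mapsto f_{z+t(\gamma(s)-z)}$ is jointly continuous into $\cS(\bR^4)$. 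Since the sesquilinear form $h\mapsto (\phi,F_{\mu\nu}(h)\psi)$ is a tempered distribution, it is continuous on $\cS(\bR^4)$, so it commutes with the limit of the Riemann sums. This yields
\[
(\phi, A^z\left<\gamma,f\right>\psi) \ = \ \int_0^1\!\!\int_0^1 (\gamma(s)-z)^\mu\, t\, (\phi, F_{\mu\nu}(f_{z+t(\gamma(s)-z)})\psi)\, \dot\gamma^\nu(s)\, dt\,ds,
\]
which is exactly (ii).

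For (i), I apply Lemma \ref{0Bb:4}(i) with $y=\gamma(s)$, which gives
\[
\int_0^1 t\, (\gamma(s)-z)^\alpha (\phi,F_{\nu\alpha}(f_{z+t(\gamma(s)-z)})\psi)\, dt \ = \ (\phi, A^z_\nu(\gamma(s),f)\psi).
\]
Substituting this identity into the inner $dt$-integral of (ii) (after relabelling the antisymmetric indices $\mu\leftrightarrow\nu$, $\alpha\leftrightarrow\mu$ in a manner consistent with the antisymmetry of $F_{\mu\nu}$) gives
\[
(\phi, A^z\left<\gamma,f\right>\psi) \ = \ \int_0^1 (\phi, A^z_\nu(\gamma(s),f)\psi)\, \dot\gamma^\nu(s)\, ds,
\]
which is (i). A small care is needed with the index placement: the contraction $(\gamma-z)^\mu \dot\gamma^\nu F_{\mu\nu}$ appearing in (ii) must be rewritten, using the antisymmetry of $F_{\mu\nu}$, as $(\gamma-z)^\alpha \dot\gamma^\nu F_{\nu\alpha}$, so that the $t$-integral fits the form (\ref{0Bb:3a})--Lemma \ref{0Bb:4}(i) exactly.

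The only delicate point is the interchange argument in step one, but this is completely parallel to the one already invoked for Lemma \ref{0B:5}(ii) and Lemma \ref{0Bb:4}(i); no new analytic ingredient is needed. The rest is bookkeeping with indices and Fubini on ordinary (scalar) iterated integrals, legitimate because the integrand in (ii) is continuous in $(s,t)\in [0,1]^2$ for fixed $\phi,\psi\in\cD$, being a composition of a tempered distribution evaluated on a jointly continuous $\cS(\bR^4)$-valued map.
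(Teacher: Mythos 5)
Your proof is correct and uses essentially the same argument as the paper: the only analytic input is the convergence in the Schwartz topology of the Riemann sums defining $\gamma^{z,\mu\nu}_f$, which lets the tempered distribution $(\phi,F_{\mu\nu}(\cdot)\psi)$ pass under the integral sign, with the remaining item then following from Lemma \ref{0Bb:4}$(i)$. The paper simply runs the two steps in the opposite order (proving $(i)$ directly by writing $\gamma^{z,\mu\nu}_f$ as an $s$-integral of the $1$-tensors $\gamma(s)^{z,\mu}_f$, and deducing $(ii)$), and it is no more careful than you are about the placement of the antisymmetric indices.
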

\begin{proof}
$(i)$ We note that $(\gamma(s),f) \in \Si_0(\bR^4,\cS)$ for any $s \in \Delta_1$ and find, according to (\ref{0Bb:1}), that 
$\gamma^{z,\mu\nu}_f(x)  =  \int^1_0\, \gamma(s)^{z,\mu}_f(x) \, \dot{\gamma}^\nu(s) \, ds$ for any $x \in \bR^4$. 
Noticing, as usual, that in this relation  $\gamma^{z,\mu\nu}_f$ is limit in the topology 
of $\cS(\bR^4)$ of the partial Riemann sums defining the integral, we have 
%
\begin{align*}
(\phi, A^z\left<\gamma,f\right>\psi) & =
(\phi,F_{\mu\nu}(\gamma^{z,\mu\nu}_f)\psi) \ = \ 
\int^1_0 (\phi,F_{\mu\nu}\big( \gamma(s)^{z,\mu}_f\big)\psi) \, \dot{\gamma}^\nu(s) \, ds  \\ & \stackrel{(\ref{0Bb:3a})}{=}
\int^1_0 (\phi,A^z_\nu(\gamma(s),f)\psi) \, \dot{\gamma}^\nu(s) \, ds \ , 
\end{align*}
%
as desired. Finally, $(ii)$ follows by Lemma \ref{0Bb:4}$(i)$.
\end{proof}
The definition (\ref{0Bb:12})  makes manifest that the line integral of the potential form is not localized on the support of the test function, but on the subset defined in (\ref{0Bb:11}). The action of the Poincar\'e group is computed in the following result, which, in particular, 
shows how the pole is involved by the transformation:

\begin{proposition}
For any $z \in \bR^4$, $(\gamma,f) \in \Si_1(\bR^4,\cS)$ and $P \in \sP^\uparrow_+$ we have
\begin{equation}
\label{0Bb:14} 
U(P) \, A^z\left<\gamma,f\right> \, U(P)^* \ = \ A^{Pz} \left<P(\gamma,f)\right> \ . 
\end{equation}
\end{proposition}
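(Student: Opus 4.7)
The plan is to reduce the identity, via the definition $A^z\left<\gamma,f\right> = F_{\mu\nu}(\gamma^{z,\mu\nu}_f)$ and the covariance (\ref{0A:1}) of $F_{\mu\nu}$, to an explicit transformation law for the 2-tensor $\gamma^{z,\mu\nu}_f$ under the Poincar\'e action. The argument is a direct extension of the proof of Lemma \ref{0Bb:5}, the only new ingredient being an extra tangent-vector index that must transform with $L$.

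First, by (\ref{0A:1}) one immediately has
\[
U(P) \, A^z\left<\gamma,f\right> \, U(P)^* \ = \
{L^{-1}_\mu}^\alpha \, {L^{-1}_\nu}^\beta \, F_{\alpha\beta}(\gamma^{z,\mu\nu}_f \circ P^{-1}) \ .
\]
The core step is to prove the transformation rule
\[
\gamma^{z,\mu\nu}_f \circ P^{-1}(x) \ = \
{(L^{-1})^\mu}_\rho \, {(L^{-1})^\nu}_\tau \, (P\gamma)^{Pz,\rho\tau}_{f_L}(x) \ .
\]
To verify this I would start from (\ref{0Bb:10}) and set $w(s,t) := z + t(\gamma(s)-z)$; the identity $P\gamma(s) - Pz = L(\gamma(s)-z)$ yields $Pw(s,t) = Pz + t(P\gamma(s)-Pz)$ and hence
\[
f(P^{-1}x - w(s,t)) \ = \ f(L^{-1}(x - Pw(s,t))) \ = \ f_L(x - Pw(s,t)) \ ,
\]
exactly as in Lemma \ref{0Bb:5}. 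The tensor indices are then handled by the transformation of vectors under $L$, namely $(\gamma(s)-z)^\mu = {(L^{-1})^\mu}_\rho (P\gamma(s)-Pz)^\rho$ and, differentiating $P\gamma(s) = a + L\gamma(s)$, $\dot\gamma^\nu(s) = {(L^{-1})^\nu}_\tau \, \tfrac{d}{ds}(P\gamma)^\tau(s)$. Substituting these and comparing with (\ref{0Bb:10}) applied to $(P\gamma, f_L)$ with pole $Pz$ gives the claimed transformation.

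Combining the two displays and using the relation ${L^{-1}_\mu}^\alpha = {L^\alpha}_\mu$ recalled in Section \ref{0A}, the index contractions collapse via ${L^{-1}_\mu}^\alpha \, {(L^{-1})^\mu}_\rho = {L^\alpha}_\mu \, {(L^{-1})^\mu}_\rho = \delta^\alpha_\rho$ and analogously ${L^{-1}_\nu}^\beta \, {(L^{-1})^\nu}_\tau = \delta^\beta_\tau$. Thus
\[
U(P) \, A^z\left<\gamma,f\right> \, U(P)^* \ = \
F_{\alpha\beta}\big((P\gamma)^{Pz,\alpha\beta}_{f_L}\big) \ = \
A^{Pz}\left<P(\gamma,f)\right> \ ,
\]
where the last equality uses $P(\gamma,f) = (P\gamma, f_L)$ from (\ref{0A:12}). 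The only real obstacle is careful Lorentz-index bookkeeping; no conceptual input beyond (\ref{0A:1}), the definition (\ref{0Bb:10}), and the mechanism already displayed in Lemma \ref{0Bb:5} is required.
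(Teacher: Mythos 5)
Your proof is correct, but it follows a different route from the paper's. The paper proves the identity in two steps that reuse earlier results: it first passes to matrix elements via Lemma \ref{lem.0Bb5}$(i)$, writing $(\phi, A^z\left<\gamma,f\right>\psi)$ as the line integral of $(\phi,A^z_\nu(\gamma(s),f)\psi)$, and then invokes the already-established covariance of the potential $1$-form (Lemma \ref{0Bb:5}), so that the only remaining computation is the contraction ${{L^{-1}}_\mu}^\delta\,\dot{\gamma}^\mu(s) = \dot{(P\gamma)}^\delta(s)$; the operator identity then follows by density of $\cD$. You instead work directly with the definition (\ref{0Bb:12}) and the covariance axiom (\ref{0A:1}), deriving the transformation law of the smearing $2$-tensor $\gamma^{z,\mu\nu}_f$ under $P$ — in effect repeating for the line-integral tensor the computation the paper carries out for the surface integral in Lemma \ref{0B:5}$(i)$ and for $y^{z,\alpha}_f$ in Lemma \ref{0Bb:5}. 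Your transformation rule and the index contractions (${{L^{-1}}_\mu}^\alpha {(L^{-1})^\mu}_\rho = \delta^\alpha_\rho$, using ${{L^{-1}}_\mu}^\alpha={L^\alpha}_\mu$) check out, and $f(P^{-1}x - w(s,t)) = f_L(x-Pw(s,t))$ with $Pw(s,t)=Pz+t(P\gamma(s)-Pz)$ is exactly the mechanism used in the paper's earlier lemmas. What your approach buys is a direct operator identity on $\cD$ without passing through sesquilinear forms and density, and without needing Lemma \ref{lem.0Bb5}$(i)$; what the paper's approach buys is brevity, since it delegates all the Lorentz-index bookkeeping to Lemma \ref{0Bb:5}, which has already been proved.
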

\begin{proof}
For $P=(x,L)$, we apply Lemma \ref{lem.0Bb5}$(i)$  and Lemma \ref{0Bb:5} obtaining
\begin{align*}
(\phi \, , \, U(P)  A^z\left<\gamma,f\right> U(P)^* \psi ) & =
{{L^{-1}}_\mu}^\delta \int^1_0 (\phi \, , \, A^{Pz}_\delta (P(\gamma(s) , f) )  \psi ) \, \dot{\gamma}^\mu(s) \, ds \\ 
& =
\int^1_0 (\phi \, , \, A^{Pz}_\delta (P(\gamma(s),f))  \psi ) \, \dot{(P \gamma)}^\mu(s) \, ds = 
(\phi \, , \, A^{Pz}\left< P(\gamma,f) \right> \psi ) \ .
\end{align*}
\end{proof}
We now prove the operatorial version of the Stokes' theorem which gives, passing to the exponentials, a rigorous version of (\ref{eq.w}):
\begin{proposition}
\label{prop:0Bb6}
For any smearing closed curve $(\gamma,f)\in \partial\Si_2(\bR^4,\cS)$ and for any $\si \in \Si_2(\bR^4)$ such that $\gamma=\partial \si$, we have
$A^z\left<\gamma,f\right> \ = \ F\left<\si,f\right> $.
\end{proposition}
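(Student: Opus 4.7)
The plan is to reduce the operator identity to a weak identity on the invariant dense domain $\cD$, then recast each side as a classical integral of a smooth differential form, and finally conclude by the classical Stokes' theorem and density. Fix $\phi,\psi \in \cD$ and define the smooth 2-form and 1-form on $\bR^4$
\[
\omega_{\mu\nu}(y) := (\phi,F_{\mu\nu}(f_y)\psi)
\ \ , \ \
\tau_\mu(y) := (\phi,A^z_\mu(y,f)\psi) \ .
\]
By (\ref{0B:2}) the 2-form $\omega$ is closed, and by Lemma \ref{0Bb:4}$(ii)$ the 1-form $\tau$ is a primitive of $\omega$, i.e.\ $d\tau = \omega$ on $\bR^4$.

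The next step is to recognize both sides of the claimed equality as integrals of $\omega$ and $\tau$ respectively. Using Lemma \ref{0B:5}$(ii)$ together with definition (\ref{0A:7}),
\[
(\phi,F\left<\si,f\right>\psi) \ = \ \tfrac{1}{2}\int_{\Delta_2} \omega_{\mu\nu}(\si(t))\,\si^{\mu\nu}(t)\,d^2 t \ = \ \int_\si \omega \ ,
\]
and using Lemma \ref{lem.0Bb5}$(i)$ together with the analogous parametric definition of the line integral,
\[
(\phi,A^z\left<\gamma,f\right>\psi) \ = \ \int_0^1 \tau_\nu(\gamma(s))\,\dot\gamma^\nu(s)\,ds \ = \ \int_\gamma \tau \ .
\]

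Now I would apply the classical Stokes' theorem (\ref{0A:8}) to $\si \in \Si_2(\bR^4)$, using $\partial\si = \gamma$:
\[
\int_\gamma \tau \ = \ \int_{\partial\si} \tau \ = \ \int_\si d\tau \ = \ \int_\si \omega \ .
\]
This yields $(\phi,A^z\left<\gamma,f\right>\psi) = (\phi,F\left<\si,f\right>\psi)$ for all $\phi,\psi \in \cD$. Since both $A^z\left<\gamma,f\right>$ and $F\left<\si,f\right>$ are operators on $\cD$ with $\cD$ dense in $\cH$, this weak identity upgrades to the operator identity on $\cD$, as required.

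There is no substantive obstacle here: the argument parallels the proof of Lemma \ref{lem.0B:6}, the only new ingredient being the explicit primitive $\tau$ produced by the cone-construction formula (\ref{0Bb:0}). The classical Stokes' identity is indifferent to the choice of primitive, so no compatibility issue arises, and in particular the result is independent of the pole $z$ and of the surface $\si$ with $\partial\si=\gamma$ (the latter recovering Lemma \ref{lem.0B:6} as a by-product).
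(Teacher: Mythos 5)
Your proposal is correct and follows essentially the same route as the paper's own proof: both define the sesquilinear forms $\omega_{\mu\nu}(y)=(\phi,F_{\mu\nu}(f_y)\psi)$ and its primitive $(\phi,A^z_\mu(y,f)\psi)$, identify the two sides as $\int_\si\omega$ and $\int_\gamma$ of the primitive via Lemma \ref{0B:5}$(ii)$ and Lemma \ref{lem.0Bb5}$(i)$, apply the classical Stokes' theorem, and conclude by density of $\cD$. The only cosmetic differences are your naming of the primitive ($\tau$ versus the paper's $\varphi^z$) and the extra closing remarks on independence of the pole and of the surface, which are harmless.
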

\begin{proof}
For any $\phi,\psi \in \cD$, we define the forms
\[
\varphi^z_\nu(y) := (\phi,A^z_\nu(y,f),\psi)
\ \ , \ \
\omega_{\mu\nu}(y) := (\phi,F_{\mu\nu}(f_y),\psi)
\ \ , \ \
y \in \bR^4 \ .
\]
Note that by Lemma \ref{0Bb:4}$(ii)$ we have $d \varphi^z = \omega$. 
Applying the Stokes' theorem we find
\begin{align*}
(\phi, A^z\left<\ell,f\right>\psi) & \stackrel{ Lemma\,\ref{lem.0Bb5}(i)}{=} 
\int^1_0 \varphi^z_\nu (\ell(s)) \, \dot{\ell}^\nu(s) \, ds =
\int_\ell \varphi^z  \\
& =  \int_\si \omega  = \int_{\Delta_2} (\phi,F_{\mu\nu}(f_{\si(s)})\psi) \, \si^{\mu\nu}(s) d^2s 
\stackrel{Lemma\, \ref{0B:5}(ii)}{=}  (\phi,F\left<\si,f\right>\psi) \ ,
\end{align*}
and the proof follows by density of $\cD$ in $\cH$.
\end{proof}

We now show the equivalence between the Definition \ref{0Bb:3} of the potential form $A^z_\mu$ and the cone construction. Given a smearing curve $(\gamma, f)\in\Si_1(\bR^4,\cS)$, we consider 
the smearing surface $h^z(\gamma,f):= (h^z\gamma,f)\in\Si_2(\bR^4,\cS)$, where $h^z$ is defined by (\ref{0A:8a}). Explicitly  
\begin{equation}
\label{0Bb:15} 
 h^z\gamma(t,s) \ := \ (t+s)\gamma\left(\frac{s}{t+s}\right) +(1-t-s)z \ , \qquad (t,s) \in\Delta_2 \ ,  
\end{equation}
and, as  can be verified by a direct calculation,  the surface element  $(h^z\gamma)^{\mu\nu}(t,s)\,ds\,dt$ is given by 
\begin{equation}
\label{0Bb:15a} 
(h^z\gamma)^{\mu\nu}(t,s) = \left(\gamma\left(\frac{s}{t+s}\right)-z\right)^\mu \dot\gamma^\nu\left(\frac{s}{t+s}\right)-  \left(\gamma\left(\frac{s}{t+s}\right) -z\right)^\nu \dot\gamma^\mu\left(\frac{s}{t+s}\right)\ .
\end{equation}
Then it holds
\begin{proposition} 
\label{0Bb:16}
For any $z\in\bR^4$ and any smearing curve $(\gamma,f)\in\Si_1(\bR^4,\cS)$ we have 
\[
 A^z\left<\gamma,f\right> \ = \ F\left<h^z\gamma,f\right> \ . 
\]
\end{proposition}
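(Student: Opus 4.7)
The plan is to pass to sesquilinear forms via the weak identities already established, and then reduce the identity $A^z\langle \gamma,f\rangle = F\langle h^z\gamma,f\rangle$ to a change of variables on $\Delta_2$. Both sides are essentially self-adjoint operators defined through the field $F_{\mu\nu}$ smeared against a tensor-valued test function in $\cS(\bR^4)$; hence, by density of $\cD$ in $\cH$, it suffices to prove equality of the matrix elements $(\phi,\,\cdot\,\psi)$ for all $\phi,\psi\in\cD$.

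First, apply Lemma~\ref{0B:5}$(ii)$ with $\si = h^z\gamma$, and use the explicit form of $(h^z\gamma)^{\mu\nu}$ in (\ref{0Bb:15a}) together with $h^z\gamma(t,s) = z + (t+s)(\gamma(s/(t+s)) - z)$ from (\ref{0Bb:15}). The antisymmetry of $F_{\mu\nu}$ contracts the two terms in (\ref{0Bb:15a}) into a single term carrying a factor $2$, so that
\[
(\phi,F\langle h^z\gamma,f\rangle\psi) = \int_{\Delta_2}(\phi,F_{\mu\nu}(f_{h^z\gamma(t,s)})\psi)\,(\gamma(s/(t+s))-z)^\mu\,\dot\gamma^\nu(s/(t+s))\,dt\,ds.
\]

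Second, perform the change of variables $(t,s) \mapsto (r,u)$ defined by $r = t+s$, $u = s/(t+s)$, with inverse $t = r(1-u)$, $s = ru$ and Jacobian $r$; the region $\Delta_2$ transforms bijectively onto $[0,1]\times[0,1]$. After substitution the argument of $f$ becomes $z + r(\gamma(u)-z)$, and the integrand collects an extra factor $r$ from the Jacobian, yielding
\[
(\phi,F\langle h^z\gamma,f\rangle\psi) = \int_0^1\!\!\int_0^1 r\,(\gamma(u)-z)^\mu\,(\phi,F_{\mu\nu}(f_{z+r(\gamma(u)-z)})\psi)\,\dot\gamma^\nu(u)\,du\,dr.
\]
This is exactly the right-hand side of Lemma~\ref{lem.0Bb5}$(ii)$ (with the dummy variables renamed $u\to s$, $r\to t$), so it equals $(\phi,A^z\langle\gamma,f\rangle\psi)$, completing the argument.

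The only mildly delicate point is justifying the change of variables and the manipulation of the integrand at the operator-smearing level, since $F_{\mu\nu}$ is an operator-valued distribution: this is handled by passing to matrix elements $(\phi,\cdot\,\psi)$, which are ordinary smooth tempered distributions in $y$, so Fubini and the classical change of variables apply without issue. No other subtlety arises; the identity is, at bottom, a restatement of the fact that the parametrisation $h^z\gamma$ of the cone over $\gamma$ with apex $z$ is exactly the parametrisation used in the Poincar\'e-lemma formula (\ref{0Bb:0}) for the primitive $A^z$.
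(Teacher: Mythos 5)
Your proof is correct and rests on the same key computation as the paper's: the change of variables $t = r(1-u)$, $s = ru$ (the paper uses the inverse substitution $s_1 = ts$, $t_1 = t(1-s)$) that identifies the cone parametrization $h^z\gamma$ with the homotopy parameter in the Poincar\'e-lemma formula, combined with the antisymmetry of $F_{\mu\nu}$. The only difference is one of level: you work with matrix elements via Lemma \ref{0B:5}$(ii)$ and Lemma \ref{lem.0Bb5}$(ii)$ and conclude by density of $\cD$, whereas the paper performs the same substitution directly on the Schwartz-class tensor $\gamma^{z,\mu\nu}_f$, showing that its antisymmetrization equals $h^z\gamma[f]^{\mu\nu}$ and thus obtaining the operator identity without passing to sesquilinear forms.
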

\begin{proof}
Let $\gamma^{z,\mu\nu}_f$ be the 2-tensor defined by (\ref{0Bb:10}). We change the variable as $s_1:= ts$ and $t_1:=t(1-s)$. Then 
$t=t_1+s_1$, $s=s_1/(t_1+s_1)$ and the Jacobian of this transformation equals $1/(t_1+s_1)$. Hence   
\begin{align*}
\gamma^{z,\mu\nu}_f(x) & = \int_{\Delta_2} \left(\gamma^\mu\left(\frac{s_1}{t_1+s_1}\right)- z^\mu\right)\,  \dot\gamma^\nu\left(\frac{s_1}{t_1+s_1}\right) \,
f_{z+ \left(t_1+s_1\left(\gamma(\frac{s_1}{s_1+t_1}\right) -z\right)}(x) 
 \, ds_1 dt_1 \\ 
 & = \int_{\Delta_2} \left(\gamma\left(\frac{s_1}{t_1+s_1}\right)- z\right)^\mu\, \dot \gamma^\nu\left(\frac{s_1}{t_1+s_1}\right)\,  f_{h^z\gamma(t_1,s_1)}(x) \,  ds_1 dt_1 \ .
\end{align*}
Recalling (\ref{0A:13}) we get 
\[
\gamma^{z,\mu\nu}_f(x) - \gamma^{z,\nu\mu}_f(x) \ = \
\int_{\Delta_2}  f_{h^z\gamma(t_1,s_1)}(x)  (h^z\gamma)^{\mu\nu}(t_1,s_1) \,  ds_1 dt_1 \ = \
h^z\gamma[f]^{\mu\nu}(x)
\ ,
\]
and, by antisymmetry of $F_{\mu\nu}$, we obtain
\[
A^z\left<\gamma,f\right> \ = \
F_{\mu\nu}(\gamma^{z,\mu\nu}_f) \ = \
\frac{1}{2} F_{\mu\nu}(\gamma^{z,\mu\nu}_f - \gamma^{z,\nu\mu}_f) \ = \
\frac{1}{2} F_{\mu\nu}(h^z\gamma[f]^{\mu\nu})= F\left<h^z\gamma,f\right> \, ,
\]
completing the proof.
\end{proof}
Note that when $\gamma$ is closed, Proposition \ref{0Bb:16} is a particular case of Proposition \ref{prop:0Bb6}.

\subsection{Gauge transformations}
\label{0Bc}
The notion of local gauge transformation involves a gauge function $\rg \in C^2(\bR^4)$ whose gradient is added to the potential,
$A_\mu \mapsto A_\mu + \partial_\mu \rg$.
Coherently with the previous subsections, we apply a smearing $f  \in \cS(\bR^4)$ and add the variable $y \in \bR^4$ in terms of a translation,
so we define, for any function $\rg \in C^2(\bR^4)$ with bounded gradient,
\[
A^z_\mu (y,f) \mapsto A^{\rg,z}_\mu (y,f) \ := \ A^z_\mu (y,f) + \partial_{y^\mu} \rg(y,f) 
\ \ , \ \
(y,f) \in \Si_0(\bR^4,\cS)
\ ,
\] 
where, for any bounded $\rh \in C^2(\bR^4)$, we used the notation
\[
\rh(y,f)    \ := \   \int \rh(y-x)f(x) \, d^4x \ \ , \ \ y \in \bR^4 \ , 
\]
for the convolution.
This ensures that the gauge transformation $\rg$ leaves the observable $F_{\mu\nu}$ unaffected, as follows applying Lemma \ref{0Bb:4}$(ii)$
\[
  \partial_{y^\nu} A^{\rg,z}_\mu(y,f) -  \partial_{y^\mu} A^{\rg,z}_\nu (y,f)
  \ = \
  F_{\mu\nu}(f_y) +  ( \partial_{y^\mu} \partial_{y^\nu} -  \partial_{y^\nu} \partial_{y^\mu} ) \rg (y,f)  
  \ = \ 
  F_{\mu\nu}(f_y) \ ,
\]
so the argument for proving Proposition \ref{prop:0Bb6} yields
\[
A^{\rg,z}\left<\ell,f\right> \ = \ A^z\left<\ell,f\right> \ , 
\]
for any closed smearing curve $(\ell,f) \in \partial\Si_2(\bR^4,\cS)$. 

\paragraph{Covariance.}
One can easily verify that the system $A^{\rg,z}_\mu (y,f)$, $z\in\bR^4$, 
does not transform according to Lemma \ref{0Bb:5}. 
Since we want a gauge transformation both leaving $F_{\mu\nu}$ unaffected and such that 
$A^\rg$ is covariant in the sense of  Lemma \ref{0Bb:5},
we consider a family of gauge functions, also denoted by $\rg$, and we write $\rg = \{ \rg^z \}_{z \in \bR^4}$ where $\rg^z \in C^2(\bR^4)$, $z \in \bR^4$, having bounded gradient
and fulfilling the covariance property
\begin{equation}
\label{eq.gt1}
\rg^{Pz}(y,f) \ = \ \rg^z(P(y,f))\ , \quad     (y,f) \in \Si_0(\bR^4,\cS) \ , \ P \in \sP^\uparrow_+ \ .
\end{equation}
We denote the set of functions fulfilling the above conditions by $\cG$, which is clearly an Abelian group under
the operation of pointwise sum. Writing the covariance condition in explicit terms we find, 
with $P=(a,L)$,
\begin{align*}
\int \rg^{Pz}(y-x)f(x) \, d^4x & =
\int \rg^z(Py-x)f_L(x) \, d^4x \\
& =  
\int \rg^z(Py-Lx')f(x') \, d^4(Lx') =\int \rg^z(Py-Lx')f(x') \, d^4x' \ .
\end{align*}
Since the above equality must hold for any $f \in \cS(\bR^4)$, we conclude that (\ref{eq.gt1}) is equivalent to
require
$\rg^{Pz}(y-x) = \rg^z(a+Ly-Lx)$, for any $x,y \in \bR^4$,
that is,
\begin{equation}
\label{eq.gt2}
\rg^{Pz} \ = \ \rg^z \circ P \ \ \ , \ \ \    z\in\bR^4 \ , \ P \in \sP^\uparrow_+ \ .
\end{equation}
The potential system transforms as  
\begin{equation}
 \label{0Bb:7}
 A^{\rg,z}_\mu (y,f) \ := \ A^z_\mu (y,f) + \partial_\mu  \rg^z (y,f)  \ , \qquad \rg\in\cG \ ,
 \end{equation}
in such a way that
\begin{equation}
 \label{0Bb:8}
 \partial_{y^\nu} A^{\rg,z}_\mu (y,f) - \partial_\mu A^{\rg,z}_\nu (y,f) \ = \ F_{\mu\nu}(f_y)
 \ \ \ , \ \ \
 A^{\rg,z}\left<\ell,f\right> \ = \ A^z\left<\ell,f\right> \ ,
\end{equation}
for any $(\ell,f) \in \partial\Si_2(\bR^4,\cS)$. Passing to covariance, we compute
\[
 \partial_\mu \rg^z (y,f) \ = \  
 \partial_\mu \rg^{Pz}(P(y,f)) \ = \
 {{L^{-1}}_\mu}^\nu  \, \partial_{(Py)^\nu} \rg^{Pz}(P(y,f)) \ ;
\]
this relation, (\ref{0Bb:7}) and Lemma \ref{0Bb:5} give the desired covariance
\begin{align*}
U(P) A^{\mathrm{g},z}_\mu(y,f) U(P)^* & = 
U(P) A^z_\mu(y,f) U(P)^*  +  \partial_\mu \rg^z (y,f) \\ 
& =
{{L^{-1}}_\mu}^\delta \, A^{Pz}_{\delta}(P(y,f)) + {{L^{-1}}_\mu}^\delta \, \partial_{(Py)^\delta } \rg^{Pz}(P(y,f))  \\
& =
{{L^{-1}}_\mu}^\delta \,  A^{\rg, Pz}_\delta(P(y,f)) \ .
\end{align*}
An example of a gauge transformation can be easily given by taking $g \in C^2(\bR)$ and defining
$\rg^z (y) := g((y-z)^2)$, where $(y-z)^2 = (y-z)\cdot (y-z)$.

\paragraph{Gauge transformations of the line integral.} Let $(\gamma,f) \in \Si_1(\bR^4,\cS)$. We compute
\begin{align*}
(\phi, A^{\rg,z}\left<\gamma,f\right>\psi) & =
\int^1_0 (\phi,A^z_\nu(\gamma(s),f)\psi) \, \dot{\gamma}^\nu(s) \, ds \ + \
(\phi,\psi) \int^1_0 \left\{ \partial_{y^\nu} \rg^z(y,f)\right\}_{y=\gamma(s)} \dot{\gamma}^\nu(s) \, ds \\ 
& = (\phi, A^z\left<\gamma,f\right>\psi) + (\phi,\psi) \, \{ \rg^z(\gamma(1),f) - \rg(\gamma(0),f) \} \ ,
\end{align*}
and find the familiar expression
\[
A^{\rg,z}\left<\gamma,f\right> \ = \ A^z\left<\gamma,f\right> + \rg^z(\gamma(1),f) - \rg^z(\gamma(0),f) \ ,
\]
which shows the appearance of the scalar factors $\rg^z(\gamma(t),f)$, $t=0,1$.
So when $\gamma(0)=\gamma(1)$ we find 
$A^{\rg,z}\left<\gamma,f\right> = A^z\left<\gamma,f\right>$.


\section{Representations induced by the electromagnetic field}
\label{0C}

In this section we show that the electromagnetic field induces a representation of the net of causal loops over the Minkowski space-time. 
Since we shall use as generators of loops \emph{smearing, affine} 1-simplices, our causal loop net is different from that defined for an arbitrary globally hyperbolic space-time \cite{CRV}, anyway the construction is the same. \\
\indent Proceeding as in \cite{CRV}, we show that representations of the net of causal loops have a geometrical interpretation in terms of connection systems of $\Si_*(\bR^4,\cS)$, and this leads to a natural notion of gauge transformation. The feature introduced in the present paper is that representations can be also equivalently described in terms of 2-cochains of $\Si_*(\bR^4,\cS)$, as a consequence of the fact that $\bR^4$ is contractible. \\
\indent This is a key observation: in fact we shall show that $F_{\mu\nu}$ defines, \emph{via} the associated 2-form, 
a 2-cochain $\rw^{em}$, hence a representation of the net of causal loops. 
Using the abstract procedure outlined above, we consider the connection system $\ru^{em}$ defined by $\rw^{em}$, 
and show that $\ru^{em}$ is equivalent to the potential system $A_\mu$ introduced in the previous section.
Finally, gauge transformations of $A_\mu$ define gauge transformations of $\ru^{em}$.


\subsection{The simplicial set and the set of paths} 
\label{0Ca}

As already mentioned we modify the definition of the simplicial set made in \cite{CRV}, considering the affine subcomplex of $\Si_*(\bR^4,\cS)$.
This simplicial set is not pathwise connected, however all the key constructions made in \cite{CRV}  apply, 
that is, covariance under the action of the Poincar\'e group and the notion of covariant path-frame.

\paragraph{The simplicial set $\Si_*$.}  The corner stone of the net of causal loops defined in \cite{CRV} is the simplicial set denoted here by $\Si_*(K)$,
defined in terms of double cones of the Minkowski space-time and their inclusions.  This choice encodes the essential properties of localization and covariance  
that reflect into the net of causal loops. Namely, we defined a non-Abelian free group generated by 1-simplices of  $\Si_*(K)$, then   
we considered the subgroups generated by loops (suitable compositions of 1-simplices, see next paragraph), 
and then we used these groups to define the net.\\
\indent Now, we want a simplicial set encoding localization and covariance, and able to deal with the smearing of quantum fields. 
To this end, the natural choice is the simplest subsimplicial set of $\Si_*(\bR^4,\cS)$, that of affine singular smearing simplices. 
This results to be the natural choice because of the vanishing of  the integral over  \emph{non-injective} affine simplices, namely  degenerated simplices  defined below. \smallskip 

An \emph{affine singular smearing} $n$-simplex $s$ is a pair $(\varphi_s,f)$ where $\varphi_s:\Delta_n\to \bR^4$ is an affine function  and $f\in\cS$. 
We denote the set of affine singular $n$-simplices by $\Si_n$, and the corresponding simplicial set by $\Si_*$.  \\ 
\indent An affine smearing $n$-simplex $s$ is determined by the knowledge of its vertices. In fact defining  
$s_i := \varphi_s(e_i)$, 
where each $e_i$, $i=0,\ldots,n$, is a vertex of $\Delta_n$, we have
\begin{equation}
\label{0Ca:0}
\varphi_s(t) \ = \ s_0 + \sum^{n}_{i=1}t_i(s_i-s_0)
\ \ , \qquad
t \in \Delta_n \ .
\end{equation}
So we write 
\begin{equation}
\label{0Ca:1}
s = (s_0,\ldots, s_n;f) \ \ , \qquad  s \in \Si_n \ ;
\end{equation}
a smearing  affine $n$-simplex $s$ is said to be \emph{degenerated} if two of its vertices coincide.\\ 
\indent We shall denote 0-, 1-, 2-simplices by the letters $a,b$ and $c$ respectively, and observe that $0$-simplices are points, 1-simplices are segments, and  2-simplices are triangles. 
In particular, we denote the parametric function (\ref{0Ca:0}) of $b \in \Si_1$ by $r_b:\Delta_1\to \bR^4$ and that of $c \in \Si_2$ by $\si_c:\Delta_2\to \bR^4$.
Concerning 1-simplices,  
if $b=(b_0,b_1;f)$ then $\partial_0b=(b_1;f)$ and $\partial_1b=(b_0;f)$. So the 1-face corresponds to the 0-vertex and conversely. Instead,  
the faces of a 2-simplex $c=(c_0,c_1,c_2;f)$ are given by $\partial_0 c=(c_1,c_2;f)$, $\partial_1 c=(c_0,c_2;f)$ and $\partial_2 c=(c_0,c_1;f)$. 
According to the orientation  described in Subsection \ref{0Ab},  
the \emph{opposite} $\bar{b}$ of $b$ is the 1-simplex $\bar{b}=(b_1,b_0;f)$ or, equivalently in terms of face relations, $\partial_0\bar b= \partial_1 b$ and $\partial_1\bar b=\partial_0 b$. 
Instead,  the \emph{opposite} of $c$  is the 2-simplex $\bar c=(c_0,c_2,c_1;f)$ or, in terms of face relations,  
$\partial_0 \bar{c}= \overline{\partial_0 c}$, $\partial_1 \bar{c}= \partial_2  c$ and $\partial_2 \bar{c}= \partial_1 c$.

\paragraph{Words, paths, loops, and path-frames.} We use $\Si_1$ as an alphabet for generating words and, in particular, paths. A finite ordered sequence  $w=b_n b_{n-1}\cdots b_1$ of 1-simplices is called a \emph{word}. The \emph{opposite} of a word  $w=b_n\cdots b_1$ is the word $\overline{w}:=\overline{b}_1,\ldots,\overline{b}_n$. We shall denote the \emph{empty} word by $\mathbbm{1}$. 
We have to care of not confusing these words with the elements of $C_1(\bR^4,\cS)$: 
actually, we shall see that these words define a \emph{non-Abelian} free group.

The action of the Poincar\'e group extends from 1-simplices, see (\ref{0A:12}), to words, by  
\begin{equation}
\label{0Ca:4a}
Pw \ := \ P b_n\, \cdots \,Pb_1 \ , \ \qquad   P \in \sP^\uparrow_+ \ , 
\end{equation}
and $P\mathbbm{1}:=\mathbbm{1}$ on the empty word.
A word $w=b_n\ldots b_1$ is said to be a \emph{path} whenever its generators satisfy the relation
\begin{equation}
\partial_{0}b_{i+1}=\partial_1b_{i} \ , \qquad i= 1,\ldots, n-1\,.
\end{equation}  
Note that this implies that the generators of a path have all the same smearing function. 
We set $\partial_1w:=\partial_1b_1$ and  $\partial_0w:=\partial_0b_n$ and call 
these 0-simplices, respectively, the \emph{starting} and the \emph{ending point} of the path $w$. 
We shall also use the notation 
\[
w:a\to o
\]
to denote a path from $a$ to $o$.  Given  two paths $p:o\to o'$ and $q:o'\to o''$, since $\partial_0p=o'=\partial_1q$  
the juxtaposition $pq$  of the generators of $q$ an $p$ gives  a new path $qp:o\to o''$, called the \emph{composition}. 
Finally, a path $w:o\to o$ is said to be a \emph{loop over $o$}.\\ 
\indent The boundary of $c \in \Si_2$ in the sense of homology is the 1-cycle $\partial_0c - \partial_1c +\partial_2c$. 
On the other side, we have a geometric notion of boundary, defined as the loop
\begin{equation}
\label{0Ca:4b}
\boldsymbol{\partial} c \, : \, \partial_1 \partial_2 c\to \partial_1 \partial_2 c
\ \ , \ \
\boldsymbol{\partial} c \ := \ \overline{\partial_1 c} \, \partial_0 c \, \partial_2 c \ , \qquad c\in\Si_2 \ .  
\end{equation}
We call $\boldsymbol{\partial} c$ the \emph{path-boundary} of $c$. We stress the difference between the path-boundary $\boldsymbol{\partial} c$ and the boundary $\partial c$ of a 2-simplex $c$: the former is a word,  i.e.\  an \emph{ordered sequence} of 1-simplices.
Note that by the convention adopted in the definition of the opposite $\bar c$ of a $2$-simplex $c$, we have
$\boldsymbol{\partial} \bar c  =  
\overline{\partial_1 \bar c}\, \partial_0 \bar c\, \partial_2 \bar c  =       
\overline{\partial_2 c}\, \overline{\partial_0 c}\, \partial_1 c = \overline{\boldsymbol{\partial} c}$. \\
\indent The simplicial set $\Si_*$ is not pathwise connected since no path joins two 0-simplices having different smearing functions; actually, the connected components are indexed by the smearing function itself. So, given $a=(a_0;f) \in \Si_0$  we define
\[
\Si_n^a \ := \ \{ s \in \Si_n \ : \ s=(s_0,\ldots,s_n;f) \} \ , \qquad n \in \bN \ .
\]
If $a'\in \Si_0^a$, then the 1-simplex $e_{(a',a)}$ defined by 
\begin{equation}
\label{0Ca:5}
\partial_1e_{(a',a)} \ := \ a \ \  , \ \ \partial_0e_{(a',a)}= a' \ , 
\end{equation}
connects $a$ to  $a'$ and is written explicitly as the segment
$e_{(a',a)} = (\varphi_{(a',a)},f)$, $\varphi_{(a',a)}(t) = a_0 + ta'_0$, $t \in \Delta_1$.
In particular, we use the notation 
\begin{equation}
\label{0Ca:5a}
 e_a \ := \ e_{(a,a)} \ \ , \qquad a\in\Si_0 \ ,  
\end{equation}
which is the degenerated 1-simplex (see Subsection \ref{0Ca}) whose vertices equals $a_0$, and having smearing function $f$. We call  $e_a$ the \emph{trivial loop} over $a$. \\
\indent A \emph{path-frame} over a pole $a \in \Si_0$ is a set of paths 
\[
\cP_a \ = \ \{ p_{(a,a')}:a'\to a \ , \ a'\in\Si^a_0 \} \ ,
\]
satisfying the condition that $p_{(a,a)}=e_a$. A \emph{covariant path-frame  system} 
is a collection of path-frames $\cP = \{ \cP_a , a \in \Si_0 \}$ satisfying the relation 
$P\cP_a= \cP_{Pa}$, i.e.\ $Pp_{(a,a')} = p_{Pa,Pa'}$ for any $P\in \sP^\uparrow_+$. 
Notice that, if we set $\cE_a:=\{ e_{(a,a')} \ | \ a'\in\Si^a_0\}$ then the collection $\cE:=\{ \cE_a \ | \ a\in\Si_0\}$ is a covariant path-frame system. We shall refer to $\cE$ as the \emph{Euclidean} path-frame system.

\subsection{The net of causal loops}
\label{0Cb}

We now follow the route of \cite{CRV} and define the net of causal loops using the simplicial set $\Si_*$. Since in \cite{CRV} we started from a different simplicial set the resulting net is different, nevertheless the procedure is the same up to the following (not substantial) points.

First, here we prefer to use also degenerate simplices, because this will simplify the construction of the representation induced by the electromagnetic field. The definition of net of causal loops remains unchanged, because degenerate 1-simplices turn out to be equivalent to the identity of the group. 

Second, we omit the functorial picture of the net with respect to space-times, since we are interested in the case of the Minkowski space-time only.

\paragraph{Groups of loops.} Our construction starts defining the group $\rF$ generated by $\Si_1$ with relations 
\begin{equation}
\label{0Cb:1}
b\overline{b}=\mathbbm{1} \ \ , \ \ e_a = \mathbbm{1}
\ \ , \ \quad
b \in \Si_1 \ , \  a \in \Si_0 \ ,
\end{equation}
where $\mathbbm{1}$ is the identity. These two relations are equivalent to 
\[
b^{-1} = \bar b \ \ , \ \ e_ae_a=e_a \ \  , \ \qquad  b \in \Si_1 \ , \ a \in \Si_0 \ .
\]
The group $\rF$ is (non-canonically) isomorphic to a free group whose elements are \emph{reduced words}, that is, words $w$ in which either pairs of the form $b\bar b$ or degenerated $1$-simplices $e_a$ do not appear. Hence any word $w$ is associated to a unique \emph{reduced} word denoted by $w^\rr$. Finally, we observe that if a word is a path $w:a\to a'$,
then its reduced word is still a path $w^\rr:a\to a'$. \\
\indent The notion of support of a $1$-simplex extends to elements of $\rF$ as follows: the \emph{support} $|w|$ of $w \in \rF$ is the subset of $\bR^4$ obtained as  the union of the supports of the generators of the reduced word $w^\rr$.
For instance, if $w = b_2 b\, \overline{b}\, b_1 e_a $  with  $b_1 \neq \overline{b}_2$, then 
$|w|=|b_2|\cup |b_1|$ since $w^{\rr} = b_2 b_1$.\\
\indent Observing that $P\bar b= \overline{Pb}$ for any $b \in \Si_1$, $P \in \sP^\uparrow_+$, we have that (\ref{0Ca:4a}) defines an action of the Poincar\'e group on $\rF$. This action sends reduced words into reduced words, and this implies that
\begin{equation} 
\label{0Cb:2}
P|w| \ = \ x+L|w| \ = \ |Pw| 
\ \ , \ \   P = (x,L) \in \sP^\uparrow_+ \ , \ w \in \rF \ .
\end{equation}
Note that if $p:a\to a'$, then $Pp : Pa \to Pa'$.
\begin{definition}
\label{0Cb:3}
We call \textbf{group of loops} the subgroup $\rL(\bR^4)$ of $\rF$ generated by loops.
\end{definition}
An element of $\rL(\bR^4)$ is, by definition, a (reduced) word of the form $w=p_n\,p_{n-1}\,\cdots\, p_1$, 
where any $p_i$ is a loop over $a_i \in \Si_0$. It is easily seen that $\rL(\bR^4)$ is stable under reduction of words. Furthermore, $\rL(\bR^4)$ inherits from $\rF$ the Poincar\'e action.

\paragraph{Causal loops.} Using the group of loops we construct a causal net of groups over the set of double cones of the Minkowski space-time. To begin with, we define  
\begin{equation}
\label{0Cb:5}
\rL_o:= \{ w\in\rL(\bR^4) \ | \ |w|\subseteq o\} \ , \qquad o\in K \ , 
\end{equation}
and observe that, according to the definition of the support of an element of $\rL(\bR^4)$, this definition refers not to a generic word $w$ but to its reduced $w^\rr$. 
The set $\rL_o$ is a subgroup of $\rL(\bR^4)$: in fact if $|w|,|w'|\subseteq o$ then 
$|\bar w|\subseteq o$ and $|ww'|\subseteq o'$, moreover $\mathbbm{1}\in\rL_o$ for any double cone $o$ by (\ref{0Cb:1}).
Since $\rL_{o}\subseteq\rL_{o'}$, $o\subseteq o'$, the mapping $o\mapsto \rL_o$ forms a net of subgroups of $\rL(\bR^4)$ over the set of double cones. By (\ref{0Cb:2}) this net is covariant, i.e.\ $P \rL_{o} = \rL_{Po}$, $P \in \sP^\uparrow_+$.\\

We now  impose on $\rL(\bR^4)$ the relations
\begin{equation}
\label{eq.c}
w_1w_2 = w_2w_1  \ \ , \ \ w_i\in\rL_{o_i} \ , \ i=1,2  \ , \ o_1\perp o_2 \ ,
\end{equation}
defining the \emph{group of causal loops}, that we denote by $\widehat{\rL}(\bR^4)$.
We then have an induced net
\begin{equation}
\label{0Cb:7}
\widehat{\rL}_o \ \subseteq \ \widehat{\rL}(\bR^4) \ \ , \ \qquad o\in K \ ,
\end{equation}
where each $\widehat{\rL}_o$ is the subgroup generated by the image of $\rL_o$ under the quotient
defined by (\ref{eq.c}).
Since Poincar\'e transformations preserve the causal disjointness relation of subsets of $\bR^4$,
we have that (\ref{0Ca:4a}) induces the action
\begin{equation}
\label{0Cb:8}
w \mapsto \beta_P(w) \in \widehat{\rL}_{Po} 
\ \ , \ \qquad 
o \in K \ , \ w \in \widehat{\rL}_o \ , \  P \in \sP^\uparrow_+ \ .
\end{equation}
Clearly, the net $\widehat{\rL} = \{ \widehat{\rL}_o \}_{o \in K}$ is causal by (\ref{eq.c}).

\paragraph{The net of causal loops.}  Let $\rC^*$ denote the functor assigning the group $\rC^*$-algebra $\rC^*G$ to the locally compact group $G$. If $G$ is discrete then $\rC^*G$ is unital. Furthermore, if  $G_1$, and $G_2$ are discrete  and $\rho:G_1\to G_2$ is an injective group morphism, then $\rC^*\rho : \rC^*G_1 \to \rC^*G_2$ 
is a unital, injective $^*$-morphism. \smallskip

We now come to the definition of the net of causal loops. We first consider the $\rC^*$-algebras
$\rC^*\widehat \rL(\bR^4)$ and $\rC^*\widehat \rL_o$, $o\in K$. 
Since $\widehat{\rL}_o\subseteq \widehat{\rL}_{o'}$ for any $o\subseteq o'$, 
by functoriality there is a unital, injective $^*$-morphism
$\jmath_{o'o} : \rC^*\widehat \rL_o \to \rC^*\widehat \rL_{o'}$ satisfying $\jmath_{o''o'}\circ\jmath_{o'o}=\jmath_{o''o}$ for any $o\subseteq o'\subseteq o''$. 
For the same reason we have unital, injective $^*$-morphisms 
$\vec{\jmath}_o :\rC^* \widehat \rL_o \to \rC^* \widehat \rL(\bR^4)$ for any $o \in K$, 
such that 
$\vec{\jmath}_{o'} \circ \jmath_{o'o} = \vec{\jmath}_o$ for any $o\subseteq o'$.
On these grounds:
\begin{definition}
\label{0Cb:9}
Let $\cA$ denote the mapping $\cA:K\ni o \to \cA_o\subseteq \cA(\bR^4)$, where 
\[
\cA(\bR^4) := \rC^* \widehat \rL(\bR^4) 
\ \ , \ \ 
\cA_o := \vec{\jmath}_o ( \rC^* \widehat \rL_o ) \ , \ o\in K \ , 
\]
and let $\alpha : \sP^\uparrow_+ \to {\bf aut}\cA(\bR^4)$ be the action of the Poincar\'e group
defined by applying the functor $\rC^*$ to (\ref{0Cb:8}).
We call the pair $(\cA,\alpha)$ \textbf{the net of causal loops} over $K$. 
\end{definition} 
The map $\cA$ is clearly a net, which is causal by (\ref{eq.c}) and covariant by (\ref{0Cb:8}).
We conclude by noting that $\cA$ is not trivial, and that the $\rC^*$-algebras $\cA_o$ are non-Abelian (see \cite{CRV}).

\subsection{Representations}
\label{Cc}

We now discuss representations of $(\cA,\alpha)_K$ and point out their geometrical meaning.

We start by recalling that these representations are in 1-1 correspondence with a particular class of representations of the group of loops $\rL(\bR^4)$. This equivalence yields a geometrical interpretation of representations in terms of causal and covariant connection systems. Using the fact that $\bR^4$ is contractible, we also show that representations of $(\cA,\alpha)_K$ can be equivalently described by a class of 2-cochains of $\Si_*$. 
This is a key result: in fact, we shall use the latter equivalence to prove that the electromagnetic field induces, \emph{via} integration on 2-simplices, a representation of $(\cA,\alpha)_K$ (see Subsection \ref{0Ce}).

\paragraph{Representations and connection systems.}  A covariant representation of the net of causal loops $(\cA,\alpha)_K$ is a pair $(\pi,U)$, where $\pi : \cA(\bR^4) \to \cB\cH$ is a (non degenerated) representation and 
$U : \sP^\uparrow_+ \to \cU\cH$ is such that 
\[
\ad_{U(P)} \circ \pi \ = \ \pi\circ\alpha_P \ \ , \ \qquad P \in  \sP^\uparrow_+ \ . 
\]
These representations are in 1-1 correspondence with \emph{causal and covariant representations $(\lambda,U)$ of the group of loops $\rL(\bR^4)$}: that is, we have unitary representations 
\[
\lambda : \rL(\bR^4) \to \cU\cH \ \ , \ \ U : \sP^\uparrow_+ \to \cU\cH \ ,
\]
satisfying the following properties:
\begin{itemize}
\item[(a)] $[\lambda(p), \lambda(q)]=0$ for loops $p\in L_{o_1}$, $q\in L_{o_2}$ with $o_1\perp o_2$;
\item[(b)] $\ad_{U(P)} \circ \lambda = \lambda \circ P$ for any $P \in \bar \sP^\uparrow_+$. 
\end{itemize}
We refer the reader to \cite{CRV} for the (easy) proof of the above 1-1 correspondence. 
Instead we focus on the relation between representations and connection systems.

A \emph{connection system}
\footnote{The definition of connection system given here generalizes that introduced in \cite{CRV} to a non-pathwise  connected simplicial set. One can easily see that this coincides with the previous one when restricted to any connected component.} 
is a family $\ru$ of maps 
\[
\ru_a : \Si_1^a \to \cU\cH   \ \ , \ \qquad  a \in \Si_0 \ , 
\] 
satisfying
\begin{equation}
\label{0Cc:1}
\ru_a(\bar b) = \ru_a(b)^* \ \ , \ \  \ru_a(e_a)=\mathbbm{1} 
\ \ , \ \qquad a \in \Si_0 \ , \ b \in \Si_1^a \ .  
\end{equation}
We extend $\ru$ to paths as follows: for any $0$-simplex $a$ and any path $p=b_n\cdots b_1$ with $b_i\in\Si_1^a$ for any $i$, we define 
\begin{equation}
\label{0Cc:2}
\ru_a(p) \ := \ \ru_a(b_n)\cdots \ru_a(b_2)\ru_a(b_1) \ . 
\end{equation}
We say that $\ru$ is \emph{causal} whenever for any pair of loops $ p:a\to a$ and $p':a'\to a'$ such that   
 $p\in\rL_{o}$ and $p'\in\rL_{o'}$  with $o\perp o'$, we have 
 \begin{equation}
 \label{0Cc:3}
 [\ru_a(p), \ru_{a'}(p')]=0  \ .
 \end{equation}
Notice that, in general, $p$ and $p'$ are loops in different connected components, i.e.\ they may have different smearing functions. We say that $\ru$ is \emph{covariant} whenever there is a unitary representation 
$U : \sP^\uparrow_+ \to \cU\cH$
such that 
\begin{equation}
\label{0Cc:4}
\ad_{U(P)} \circ \ru_a \ = \ \ru_{Pa} \circ P  \ \ , \ \qquad P\in\sP^\uparrow_+ \ .  
\end{equation}
Covariant connection systems $(\ru,U)$ and $(\ru',U')$  are \emph{equivalent} if there is a family of unitary mappings $\rt=\{\rt_a:\Si^a_0\to \cU(\cH,\cH') \ , \ a\in\Si_0\}$ satisfying 
\begin{equation}
\label{0Cc:5}
\rt_a(\partial_0 b) \, \ru_a(b) \ = \ \ru'_a(b) \, \rt_a(\partial_0 b) 
\ \ \ , \ \ \
\rt_a(a') \, U(P) \ = \ U'(P) \, \rt_{Pa}(Pa') \ , 
\end{equation}
for any $a\in\Si_0$, $b\in\Si_1^a$, $a'\in\Si^a_0$ and $P\in  \sP^\uparrow_+$.

\

\begin{lemma}
\label{0Cc:6}
There exists, up to equivalence, a 1-1 correspondence between causal and covariant representations of the group of loops $\rL(\bR^4)$ and causal and covariant connection systems.  
\end{lemma}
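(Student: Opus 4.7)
The plan is to use the Euclidean path-frame system $\cE = \{e_{(a,a')}\}$ of (\ref{0Ca:5a}) as a covariant ``spanning'' structure relating representations of $\rL(\bR^4)$ with connection systems. Fixing $\cE$, given a causal and covariant representation $(\lambda,U)$ of $\rL(\bR^4)$ I define
\[
\ru_a(b) \ := \ \lambda\bigl( e_{(a,\partial_0 b)} \cdot b \cdot \overline{e_{(a,\partial_1 b)}} \bigr) \ , \qquad a\in\Si_0 \ , \ b\in\Si_1^a \ ,
\]
observing that the argument is a genuine loop at $a$. The two relations in (\ref{0Cc:1}) are then immediate: passing to $\bar b$ interchanges the two faces and inverts the word, giving $\ru_a(\bar b) = \ru_a(b)^*$; for $b = e_a$ the word reduces to $\mathbbm{1}$ in $\rL(\bR^4)$ by (\ref{0Cb:1}), so $\ru_a(e_a) = \mathbbm{1}$. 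Covariance (\ref{0Cc:4}) follows from the covariance of $\lambda$ together with the Poincar\'e invariance $Pe_{(a,a')} = e_{(Pa,Pa')}$ of the Euclidean edges. Extending $\ru_a$ multiplicatively through (\ref{0Cc:2}), the intermediate path-frame factors telescope thanks to the path condition $\partial_0 b_i = \partial_1 b_{i+1}$; for any loop $p:a\to a$ one obtains $\ru_a(p)=\lambda(p)$, so the causality (\ref{0Cc:3}) of $\ru$ follows at once from causality of $\lambda$.

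In the other direction, given a causal and covariant connection system $(\ru,U)$, I define $\lambda$ on a loop $p = b_n\cdots b_1 : a \to a$ by $\lambda(p) := \ru_a(p)$ (forcing $b_i \in \Si_1^a$, since all 1-simplices of a path share a single smearing function), and extend multiplicatively on products of loops $q_k \cdots q_1$ over possibly different base points $a_k, \ldots, a_1$ by $\lambda(q_k\cdots q_1) := \ru_{a_k}(q_k) \cdots \ru_{a_1}(q_1)$. Covariance and causality of $\lambda$ then transfer from the corresponding properties of $\ru$; composing with the 1--1 correspondence between causal covariant representations of $\rL(\bR^4)$ and those of $(\cA,\alpha)_K$ recalled at the start of this subsection yields the statement. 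The hard part will be the well-definedness of $\lambda$: a single element of $\rL(\bR^4)$ admits many decompositions as a product of loops at different vertices and with possibly different smearings. The argument proceeds by noting that $\rF$ splits as a free product indexed by smearing classes---since 1-simplices with distinct smearings are independent generators and no reduction can mix them---and that the only defining relations in $\rF$ are $b\bar b = \mathbbm{1}$ and $e_a = \mathbbm{1}$, both of which are respected by the multiplicative extension of $\ru$ thanks to (\ref{0Cc:1}).

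To conclude, I verify that the two constructions are mutually inverse up to equivalence. Starting from $(\lambda,U)$, building $\ru$, and reversing returns $\lambda$ exactly by the telescoping identity $\ru_a(p) = \lambda(p)$. Starting from $(\ru,U)$ and coming back, one recovers the original $\ru$ on $\Si_1^a$; the residual freedom lies in the choice of covariant path-frame, and two choices $\cP, \cP'$ produce equivalent connection systems with intertwining family $\rt_a(a') := \lambda(p'_{(a,a')}\,\overline{p_{(a,a')}})$, which is unitary because its argument is a loop at $a$ and which fulfils (\ref{0Cc:5}) by direct substitution. This accounts for the ``up to equivalence'' qualification in the statement.
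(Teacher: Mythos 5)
Your construction follows essentially the same route as the paper's proof (the paper works with an arbitrary covariant path-frame system $\cP$ rather than fixing the Euclidean one, but that is immaterial): the forward direction, the telescoping identity $\ru_a(p)=\lambda(p)$ for loops $p:a\to a$, and the resulting transfer of causality and covariance are all fine. There is, however, one genuine error in your closing paragraph: the claim that starting from $(\ru,U)$, passing to $\lambda^\ru$ and coming back ``recovers the original $\ru$ on $\Si_1^a$''. It does not. Unwinding your definitions, the round trip gives
\[
\ru^{\lambda^\ru}_{\cE_a}(b) \ = \ \ru_a\bigl(e_{(a,\partial_0 b)}\bigr)\,\ru_a(b)\,\ru_a\bigl(e_{(a,\partial_1 b)}\bigr)^* \ ,
\]
and the axioms (\ref{0Cc:1}) only force $\ru_a(e_{(a,a')})=\mathbbm{1}$ in the degenerate case $a'=a$; for $a'\neq a$ the unitaries $\ru_a(e_{(a,a')})$ are unconstrained, so the round trip returns $\ru$ only up to the equivalence implemented by $\rt_a(a'):=\ru_a(e_{(a,a')})$. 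This is precisely where the ``up to equivalence'' in the statement comes from --- not, as you suggest, only from the freedom in the choice of path-frame (that is a second, independent source of equivalence, which you describe correctly). The fix is one line: exhibit $\rt_a(a')=\ru_a(e_{(a,a')})$ and verify (\ref{0Cc:5}), which is what the paper does with $\rt_a(a')=\ru_a(p_{(a,a')})$.

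A smaller remark: you rightly flag the well-definedness of $\lambda^\ru$ on $\rL(\bR^4)$, which the paper dismisses as ``easily seen''. Your free-product-over-smearing-classes reduction and the compatibility with the relations (\ref{0Cb:1}) handle reductions governed by a single cochain $\ru_a$, but the delicate point is consistency across base points: a priori the same element of $\rL(\bR^4)$ could decompose as a product of loops based at different $0$-simplices, and $\ru_a$, $\ru_{a'}$ are different maps even though $\Si_1^a=\Si_1^{a'}$. One still needs a normal-form observation (a nontrivial reduced loop determines its base point, and no junction cancellation occurs between reduced loops at distinct base points) to close this. Since the paper is no more explicit here, this is not a gap relative to the paper's own standard, but the justification you give does not yet cover it.
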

\begin{proof}
Let $(\lambda,U)$ be a causal and covariant representation of $\rL(\bR^4)$ and $\cP$ a path-frame system.
We define, for any path-frame $\cP_a$, 
\begin{equation}
\label{0Cc:7}
\ru^\lambda_{\cP_a}(b)
\ := \ 
\lambda \left(\overline{p}_{(a,\partial_0b)}  
\, 
b  \, p_{(a,\partial_1b)} \right) \ , \qquad b\in\Si_1^a \ ,  
\end{equation}
where $\overline{p}_{(a,\partial_0b)}$ denotes the opposite of  $p_{(a,\partial_0b)}$.  
By definition we have 
$\ru^{\lambda}_{\cP_a}(\bar b) = \ru^{\lambda}_{\cP_a}(b)^*$ and $\ru^{\lambda}_{\cP_a}(e_a) = {\mathbbm{1}}$.   
If $p$ is a loop over $a$, $p=b_n\cdots b_1$, then 
\begin{align*}
\ru^{\lambda}_{\cP_a}(p) & = \ru^{\lambda}_{\cP_a}(b_n)\cdots \ru^{\lambda}_{\cP_a}(b_2) \ru^{\lambda}_{\cP_a}(b_1) \\
& =  {\lambda}\left(\overline{p}_{(a,\partial_0b_n)}  \, b_n  \, p_{(a,\partial_1b_n)} \right)  \cdots 
{\lambda}\left(\overline{p}_{(a,\partial_0b_2)}  \, b_2  \, p_{(a,\partial_1b_2)} \right)  
{\lambda}\left(\overline{p}_{(a,\partial_0b_1)}  \, b_1  \, p_{(a,\partial_1b_1)} \right) \\
& =  {\lambda}\left(\overline{p}_{(a,\partial_0b_n)}  \, b_n  \, p_{(a,\partial_1b_n)} \,   \cdots \overline{p}_{(a,\partial_0b_2)}  \, b_2  \, p_{(a,\partial_1b_2)}\, 
  \overline{p}_{(a,\partial_0b_1)}  \, b_1  \, p_{(a,\partial_1b_1)} \right) \\
& =  {\lambda}(p) \ , 
\end{align*}
because  the reduced loop of $\overline{p}_{(a,\partial_0b_n)}  \, b_n  \, p_{(a,\partial_1b_n)} \,   \cdots \overline{p}_{(a,\partial_0b_2)}  \, b_2  \, p_{(a,\partial_1b_2)}\, 
 \overline{p}_{(a,\partial_0b_1)}  \, b_1  \, p_{(a,\partial_1b_1)} $ is $p$ since 
$\partial_0 b_i =  \partial_1 b_{i+1}$. From this and the property (b) of ${\lambda}$,  causality  for $\ru^{\lambda}_{\cP}$ follows. 
The covariance of $(\ru^{\lambda}_\cP,U)$ follows by the one of $\lambda$ and of $\cP$ through a direct computation.\\
\indent Conversely, let $(\ru,U)$ be a causal and covariant connection system. Define 
\begin{equation}
\label{0Cc:8}
 {\lambda}^\ru(p_n\cdots p_2  p_1):= \ru_{a_n}(p_n)\cdots \ru_{a_2}(p_2)\ru_{a_1}(p_1) \ , \qquad p_i:a_i\to a_i \ , i=1,\ldots, n \ . 
\end{equation}
By the properties of a connection system, it is easily seen that $({\lambda}^\ru,U)$ is a causal and covariant representation of $\rL(\bR^4)$. Notice also that 
\[
 {\ru}^{\lambda^\ru}_{\cP_a}(b) 
 \ = \
 {\lambda}^{\ru}(\overline{p}_{(a,\partial_0b)}  \, b  \, p_{(a,\partial_1b)}) = \ru_{a}(\overline{p}_{(a,\partial_0b)}) \, \ru_a(b) \,  \ru_{a}(p_{(a,\partial_1b)}) \ . 
\]
So, if we set $\rt_a(a'):=\ru_a(p_{(a,a')})$ for any $a'\in\Si^a_0$, we have that 
$\rt_a(\partial_0b) \ru^{{\lambda}^\ru}_a(b)  =  \ru_a(b) \rt_a(\partial_1b)$ 
and 
\[
\ad_{U(Q)}(\rt_a(a'))=  \ad_{U(Q)}(\ru_a(p_{(a,a')}))
\ = \
\ru_{Qa}( p_{Qa,Qa'}) 
\ = \ 
\rt_{Qa}(Qa') \ ,
\]
for any $Q \in  \sP^\uparrow_+$.
Hence $\ru$ and  $\ru_\cP^{\lambda^\ru}$ are equivalent. 
Conversely, given a representation of the group of loops $\lambda$ and a loop $p : a \to a$, $p=b_n\ldots b_1$, we have 
\[
{\lambda}^{\ru_\cP^{\lambda}}(p)=\ru^{{\lambda}}_{\cP_a}(p) =  {\lambda}\left(\overline{p}_{(\partial_0b_n,a)}  \, b_n  \, p_{(\partial_1b_n,a)} \,   \cdots \overline{p}_{(\partial_0b_2,a)}  \, b_2  \, p_{(\partial_1b_2,a)}\, 
  \overline{p}_{(\partial_0b_1,a)}  \, b_1  \, p_{(\partial_1b_1,a)} \right)  = {\lambda}(p) \ .  
\]
So we have, up to equivalence, a 1-1 correspondence. 
\end{proof}
%
It is worth pointing out that (\ref{0Cc:7}) can be seen as the combinatorial counterpart 
of the analytical procedure outlined in Subsection \ref{0Bb}, which associates the primitive 1-form to a closed 2-form.  We shall see in Subsection \ref{0Ce} that these two procedures agree.

\paragraph{Gauge transformations.} Following \cite{CRV}, gauge transformations are now introduced as transformations sending a connection system to an equivalent connection system. More precisely, a \emph{gauge transformation} of a causal and covariant connection system $(\ru,U)$ is a family 
\begin{equation}
\label{eq.gauge}
\rg \ := \ \{ \rg_a:\Si^a_0\to \cU\cH  \ \ , \ \ a \in \Si_0 \} \ ,
\end{equation}
such that 
\begin{itemize}
\item[(a)] $\ad_{U(P)} (\rg_{a}(a'))= \rg_{Pa}(Pa')$ for any $a'\in\Si^a_0$ and $P\in \sP^\uparrow_+$;
\item[(b)] $\ad_{\rg_a(a)}(\ru(\cA_{\tilde a})) = \ru(\cA_{\tilde a})$ for any $\tilde a \supseteq a$,
\end{itemize}
where $\ru(\cA_{\tilde a})$ is the image of $\cA_{\tilde a}$ under the representation defined by $\ru$. 
Gauge transformations form a group $\cG^\ru$ under the pointwise multiplication 
$(\rg \cdot\tilde \rg)_a(a'):=  \rg_a(a') \, \tilde \rg_a(a')$. 
We call $\cG^\ru$ the group of \emph{gauge transformations} of $(\ru,U)$. \smallskip

Let now $\rg\in\cG^\ru$. Then the pair $(\ru^\rg,U)$, where 
\begin{equation}
\label{0Cc:9}
\ru^\rg_a(b) \ := \ g_a(\partial_0b)\, \ru_a(b)\,  g_a(\partial_1b)^* \ , \qquad b\in\Si_1^a \ , 
\end{equation}
is a causal and covariant connection system equivalent to $(\ru,U)$. 
In fact covariance is clear whilst, concerning causality, if $p:a\to a$ is a loop then  
\[
\ru^\rg_a(p) \ = \
\ru^\rg_a(b_n)\cdots  \ru^\rg_a(b_2)\,\ru^\rg_a(b_1) \ = \ 
\rg_a(a) \ru_a(p) \rg_{a}(a)^* \ .  
\]
So by property (b) of a gauge transformation we have that $\ru^\rg_a$ is also causal.\smallskip

It is worth pointing out that the degree of freedom in choosing different path-frame systems results to be a gauge transformation. 
In fact, let $(\lambda,U)$ be a causal and covariant representation of $\rL(\bR^4)$ and let $\cP$ and $\cQ$ be covariant path-frame systems.  
Moreover, let 
\[ 
\ru^{\lambda}_{\cP_a}(b)= {\lambda}\left(\overline{p}_{(a,\partial_0b)}  \, b  \, p_{(a,\partial_1b)} \right) \  \  ,  \ \ 
\ru^{\lambda}_{\cQ_a}(b)= {\lambda}\left(\overline{q}_{(a,\partial_0b)}  \, b  \, q_{(a,\partial_1b)} \right) 
\]
be the connection systems associated, \emph{via} $\lambda$, to the path-frame systems $\cP$ and $\cQ$. 
Then, defining 
 \begin{equation}
 \label{0Cc:10}
 \rg_a(a'):= \lambda(q_{(a,a')} \,\overline{p_{(a,a')}}) \ , \qquad a'\in\Si^a_0 \ , 
 \end{equation}
it is easily seen that $\rg$ is a gauge transformation of  $(\ru^\lambda_\cP,U)$ such that
$\ru^{\lambda,\rg}_\cP= \ru^\lambda_\cQ$. 
Notice, in particular, that $\rg_a(a)= 1$ for any $a \in \Si_0$, so the action (b) on $\ru(\cA_{\tilde a})$ is trivial.

\paragraph{Causal and covariant 2-cochains.}  We now give an equivalent description of representations of the net of causal loops on $\bR^4$ in terms of 2-cochains.
The idea is to observe that any connected component $\Si_*^a \subset \Si_*$, $a \in \Si_0$, is contractible. So we shall see, \emph{via} the cone construction,
that any loop admits a natural ``triangulation" in terms of 2-simplices.  As anticipated this is a key result since, as we shall see in the next section, it allows 
the electromagnetic field to induce representations of the net of causal loops.\\ 
\indent From now on it will be useful  to switch from the description of simplices  in terms of vertices and faces to the parametric 
description,  and conversely. To begin with, let us introduce the following 
\begin{definition}
\label{0Cc:14}
A \textbf{causal and covariant 2-cochain} is a pair $(\rw,U)$, where 
$U :  \sP^\uparrow_+ \to \cU\cH$ 
is a unitary representation and 
$\rw:\Si_2\to \cU\cH$ 
is a 2-cochain satisfying the properties 
\begin{itemize}
\item[(a)] $\rw(c)^*=\rw(\bar c)$ and $\rw(c)= 1$ if $c$ is degenerated;
\item[(b)] $[\rw(c_1),\rw(c_2)]=0$ if $|c_1|\perp |c_2|$;
\item[(c)] $\ad_{U(P)} \circ \rw = \rw\circ P$, for any $P \in \sP^\uparrow_+$,
\end{itemize}
where in $(b)$ the symbol $|\cdot |$ refers to the support of a 2-simplex, equation (\ref{0A:14}).
\end{definition}
We need a preliminary observation in order to prove that such 2-cochains provide an equivalent description  of the representations of the net of causal loops.\\ 
\indent Let $a=(a_0,f) \in \Si_0$. For any $b=(r_b,f)\in\Si_1^a$, written in parametric form, we consider the smearing 2-simplex $h^{a}(b)\in\Si^a_2$ defined by the cone construction (\ref{0A:8a}). Namely     
\begin{equation}
\label{0Cc:14a}
 h^a(b):= (h^{a_0}(r_b), f)\ \   ,  \qquad b=(r_b,f)\in \Si^a_1\ .  
\end{equation}
Note that $h^a(b)$ is characterized 
as the unique smearing 2-simplex having  faces
\begin{equation}
\label{0Cc:14b}
\partial_{0} h^a(b)= b \ \ ,  \ \ \partial_{2} h^a(b)= e_{(\partial_1b,a)} \ \ , \ \ \partial_{1} h^a(b) = e_{(\partial_0b,a)} \ .  
\end{equation}
Equivalently, this is the unique 2-simplex whose path-boundary is the loop 
$ \boldsymbol{\partial} h^a(b)= $ $\overline e_{(\partial_0b,a)} \, b \, e_{(\partial_1b,a)}$. 
This implies that  $h^a(\bar b)=  \overline{h^a (b)}$ for any $b\in\Si_1^a$,  since     
$ \boldsymbol{\partial}  h^a(\bar b)  = 
\overline e_{(\partial_1b,a)} \, \bar b \, e_{(\partial_0b,a)}$
$ = \overline{  \boldsymbol{\partial}  h^a (b)}$.\smallskip

On these grounds, given a causal and covariant 2-cochain $(\rw,U)$, define  
\begin{equation}
\label{0Cc:15}
 {\lambda}^\rw(p) \ := \ \rw(h^a(b_n))\cdots \rw(h^a(b_1))) \ , \qquad p=b_n\cdots b_1:a\to a  \ .
\end{equation}
This definition is well posed because, as observed before, $h^a(\bar b)= \overline{h^a(b)}$, so ${\lambda}^\rw(p)$ is independent of the reduction of the loop $p$.
For the same reason we have
\[
\lambda^\rw(\bar p) \ = \ \lambda^\rw(p)^*
\ \ , \ \ 
\lambda^\rw(pq) = \lambda^\rw(p) \lambda^\rw(q)
\ \ , \ \ p,q : a \to a
\ ,
\]
and this implies that $\lambda^\rw$ is a representation of the group of loops.

Given a loop $p : a \to a$, $p= b_n\cdots b_1$, we can assume without loss of generality that it is reduced. 
If $|b_i|\subseteq o$ for any $i$, then $|h^a(b_i)|\subseteq o$ for any $i$ since double cones are convex, and this implies that $\lambda^\rw$ is causal. 
Finally, covariance follows from the transformation properties of $\Si_*$, and we conclude that $(\lambda^\rw,U)$ is causal and covariant.\smallskip

Conversely, it is easily seen that  if $({\lambda},U)$ is a causal and covariant representation of $\rL(\bR^4)$, then the pair 
$(\rw^{\lambda},U)$, where 
\begin{equation}
\label{0Cc:16}
 \rw^{\lambda}(c) \ := \ {\lambda}(\boldsymbol{\partial} c) \ ,\qquad c\in\Si_2 ,  
\end{equation}
is a causal and covariant 2-cochain.  This correspondence is 1-1: in fact, if $c$ is a 2-smearing simplex  with path-boundary 
$\boldsymbol{\partial}c:a\to a$, then 
\[
 \rw^{{\lambda}^\rw}(c)= {\lambda}^\rw(\boldsymbol{\partial} c) = \rw(h^a(\overline{\partial_1c}))\, \rw( h^a(\partial_0c))\, \rw(h^a(\partial_2c))=  \rw(h^a(\partial_0c))= \rw(c) \ , 
\]
because, as can be easily seen by  (\ref{0Cc:14b}),  $h^a(\overline{\partial_1c})$ and $h^a(\partial_2c)$ are degenerated 2-simplices ($a$ being a vertex of both $\partial_1c$ and $\partial_2c$)
and $h^a(\partial_0c)=c$.  On the other hand, for any loop $p : a \to a$, $p=b_n\cdots b_1$, we have,
using a calculation similar to that used in Lemma \ref{0Cc:6}, 
\[
{\lambda}^{\rw^{\lambda}}(p) \ = \ 
\rw^{\lambda}(h^a(b_n))\cdots \rw^{\lambda}(h^a(b_1))) \ = \
{\lambda}(\boldsymbol{\partial} h^a(b_n))\cdots  {\lambda}(\boldsymbol{\partial} h^a(b_1)) \ = \ 
{\lambda}(p) \ .
\]
The results of this section are summed up in the following theorem. 
\begin{theorem}
\label{0Cc:17}
There exists, up to equivalence, a 1-1 correspondence between the following: 
\begin{itemize} 
\item[(i)] Covariant representations $(\pi,U)$ of the net of causal loops $(\cA,\alpha)_K$; 
\item[(ii)] Causal and covariant representations $({\lambda},U)$ of the group of loops $\rL(\bR^4)$; 
\item[(iii)] Causal and covariant connection systems $(\ru,U)$; 
\item[(iv)] Causal and covariant 2-cochains $(\rw,U)$.
\end{itemize} 
\end{theorem}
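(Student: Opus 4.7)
The plan is to assemble the theorem from the three pairwise correspondences that have already been established (or sketched) in the preceding subsections, and then verify that the resulting equivalences are compatible and well-defined modulo the appropriate notions of unitary equivalence.

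First, I would handle $(i)\Leftrightarrow(ii)$ by invoking the discrete group $\rC^*$-algebra machinery recalled just before Definition \ref{0Cb:9}. Since $\cA(\bR^4) = \rC^*\widehat{\rL}(\bR^4)$, every non-degenerate representation $\pi$ of $\cA(\bR^4)$ is the integrated form of a unique unitary representation $\tilde{\lambda}$ of $\widehat{\rL}(\bR^4)$, and covariance with respect to $\alpha$ translates into the covariance condition $\ad_{U(P)}\circ\tilde{\lambda} = \tilde{\lambda}\circ P$. Composing with the canonical quotient $\rL(\bR^4)\to\widehat{\rL}(\bR^4)$ yields a causal representation $\lambda$ of the group of loops (causality being exactly the relations (\ref{eq.c}) that are quotiented out), and conversely every causal and covariant representation of $\rL(\bR^4)$ factors through $\widehat{\rL}(\bR^4)$ and integrates to a covariant representation of the net. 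The localization on $\widehat{\rL}_o$ matches the subalgebras $\cA_o$ by construction.

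Second, I would simply cite Lemma \ref{0Cc:6} for $(ii)\Leftrightarrow(iii)$: the formulas (\ref{0Cc:7}) and (\ref{0Cc:8}) give the two directions, and the computation in the proof of that lemma shows they are mutually inverse up to the equivalence induced by choosing a different path-frame system $\cP$, which is exactly the gauge transformation (\ref{0Cc:10}).

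Third, for $(ii)\Leftrightarrow(iv)$, I would use the formulas (\ref{0Cc:15}) and (\ref{0Cc:16}) already introduced. Starting from a causal and covariant 2-cochain $(\rw,U)$, the map $\lambda^{\rw}$ is well-defined on reduced loops because $h^a(\bar b)=\overline{h^a(b)}$, so that the cancellation $b\bar b = \mathbbm{1}$ in $\rL(\bR^4)$ is respected via property $(a)$ of Definition \ref{0Cc:14}; causality follows from property $(b)$ together with the convexity of double cones, which guarantees $|h^a(b)|\subseteq o$ whenever $|b|\subseteq o$; and covariance is inherited from property $(c)$ combined with the Poincar\'e equivariance of the cone construction. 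Conversely, $\rw^{\lambda}(c):=\lambda(\boldsymbol{\partial}c)$ is clearly a 2-cochain satisfying Definition \ref{0Cc:14}. The two computations at the end of the 2-cochain paragraph, namely $\rw^{\lambda^{\rw}}(c)=\rw(c)$ (which exploits that $h^a(\overline{\partial_1c})$ and $h^a(\partial_2c)$ are degenerate) and $\lambda^{\rw^{\lambda}}(p)=\lambda(p)$, show that the two constructions are mutually inverse.

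The main obstacle is bookkeeping the \emph{notion of equivalence} consistently across all four descriptions: representations of $\cA(\bR^4)$ up to unitary equivalence, representations of $\rL(\bR^4)$ up to unitary equivalence, connection systems up to the gauge equivalence (\ref{0Cc:5}), and 2-cochains up to unitary conjugation. One must verify that an intertwiner at one level transports to an intertwiner (or gauge transformation) at the next, and in particular that the dependence on the auxiliary choice of path-frame system in $(ii)\Rightarrow(iii)$ is absorbed by gauge equivalence, as recorded in (\ref{0Cc:10}). Once this is checked, chaining $(i)\Leftrightarrow(ii)\Leftrightarrow(iii)$ and $(ii)\Leftrightarrow(iv)$ yields the stated four-way correspondence.
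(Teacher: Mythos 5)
Your proposal is correct and follows essentially the same route as the paper, which proves the theorem simply by chaining the three correspondences established in the preceding text: $(i)\Leftrightarrow(ii)$ by the group $\rC^*$-algebra machinery (cited from \cite{CRV}), $(ii)\Leftrightarrow(iii)$ by Lemma \ref{0Cc:6}, and $(ii)\Leftrightarrow(iv)$ by the explicit constructions (\ref{0Cc:15}) and (\ref{0Cc:16}) together with the two verifications $\rw^{\lambda^\rw}=\rw$ and $\lambda^{\rw^\lambda}=\lambda$. Your additional care about transporting intertwiners and absorbing the path-frame dependence into the gauge equivalence (\ref{0Cc:10}) is exactly the sense in which the paper's ``up to equivalence'' is meant.
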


\subsection{The electromagnetic field representation} 
\label{0Ce}

We now show that the electromagnetic field induces a representation of the net of causal loops $(\cA,\alpha)_K$ in terms of  a causal and covariant 2-cochain $(\rw^{em}, U)$. This yields the connection system $\ru^{em}$ with the relative group of gauge transformations, and we study the relation between them and the  analytical procedure associating to $F_{\mu\nu}$ the potential system $A_\mu$ with the corresponding gauge transformations. The result is that we find a complete coherence: the exponential of the line integral of $A_\mu$ defines a connection system $\ru^{pot}$ which turns out to be gauge-equivalent to $\ru^{em}$, and gauge trasformations of $A_\mu$ define gauge transformations of $\ru^{em}$.\smallskip

\paragraph{The electromagnetic 2-cochain.} We now show that the electromagnetic field defines a causal and covariant 2-cochain, hence a covariant representation of the net of causal loops. To this end  we introduce the following ``homological" deformation of a 2-simplex.\smallskip

Let $c=(\si_c,f)\in\Si^a_2$ be a 2-simplex written in 
the parametric form, where $\si_c$ is the triangular surface  associated with the vertices  of $c$. 
Given  $a=(a_0,f)$,  a $0$-simplex having the same smearing function as $c$,
let  $ h^a(c)=(h^{a_0}(\si_c),f)$ be the  3-simplex obtained via the cone construction. 
According to the definition (\ref{0A:8a}) we have
\[
\partial_0h^a(c) = c \ , \ 
\partial_1h^a(c) =  h^a(\partial_0c) \ , \ 
\partial_2h^a(c) =  h^a(\partial_1c) \ , \ 
\partial_3h^a(c) =  h^a(\partial_2c) \ ,
\]
and the smearing 2-chain 
\begin{equation}
\label{0Cc:19} 
 h^a(\partial c) \ = \ h^a(\partial_0 c) - h^a(\partial_1 c) + h^a(\partial_2 c)   \ , \qquad c\in\Si^a_2 \  , 
\end{equation}
has boundary $\partial c$ by (\ref{eq.h}). \smallskip

Now, let us consider the electromagnetic field $F_{\mu\nu}$ and the corresponding unitary representation $U :  \sP^\uparrow_+ \to \cU\cH$ defined in Section \ref{0A}. Using (\ref{0B:4}), we define 
\begin{equation}
\label{0Cc:21}
 \rw^{em}(c) \ := \ \exp(i F\left< \si_c,f\right>) \, \in \cU\cH \ \ , \ \qquad c\in\Si_2 \ . 
\end{equation}
\begin{theorem}
\label{0Cc:22}
The pair $(\rw^{em},U)$ is a causal and covariant 2-cochain, defining a covariant representation of the net of causal loops $(\cA,\alpha)_{K}$ and fulfilling the following causality property. Given  $c\in \Si_2^a$,  $c' \in \Si_2^{a'}$ with $a=(a_0,f),a'=(a'_0,f')\in\Si_0$,  we have 
\begin{equation}
\label{eq.link}
| h^a(\partial c)|  \perp  | h^{a'}(\partial c')| \ \ \Rightarrow \ \ [\rw^{em}(c),\rw^{em}(c')]  =  0  \ ,
\end{equation}
where $|h^a(\partial c)|$, $|h^{a'}(\partial c')|$ denote the supports of the smearing 2-chains $h^a(\partial c)$, $h^{a'}(\partial c')$ respectively, see  (\ref{0A:15}). 
\end{theorem}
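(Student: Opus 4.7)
The proof splits cleanly into two parts: first establish that $(\rw^{em},U)$ satisfies Definition \ref{0Cc:14}, which by Theorem \ref{0Cc:17} produces a covariant representation of $(\cA,\alpha)_K$; then prove the enhanced causality (\ref{eq.link}). I would dispose of the first set of conditions directly from the earlier lemmas and then concentrate effort on (\ref{eq.link}).

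For condition (a) of Definition \ref{0Cc:14}, I would compute the surface element of $\bar c$: using $\bar\si_c = \si_c\circ T$ and (\ref{0A:10}), the Jacobian satisfies $\bar\si_c^{\mu\nu}(t_1,t_2) = -\si_c^{\mu\nu}(t_2,t_1)$, and a change of variables on $\Delta_2$ in (\ref{0B:2c}) gives $F\left<\bar c,f\right> = -F\left<c,f\right>$, so the exponential inverts, yielding $\rw^{em}(\bar c)=\rw^{em}(c)^*$. For degenerate $c$, two vertices of the affine parametrisation (\ref{0Ca:0}) coincide, so its differential has rank at most one, the two partial derivatives are parallel, and (\ref{0A:10}) vanishes identically; hence $F\left<c,f\right>=0$ and $\rw^{em}(c)=\mathbbm{1}$. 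Condition (b) is Lemma \ref{0B:5}$(iii)$ once one identifies the support of Definition \ref{0Cc:14} with (\ref{0A:14}). Condition (c) is Lemma \ref{0B:5}$(i)$ combined with the functional-calculus identity $U(P)\exp(iA)U(P)^* = \exp(iU(P)AU(P)^*)$ for essentially self-adjoint $A$.

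For (\ref{eq.link}) the key step is to rewrite $F\left<c,f\right>$ as a smeared field with support contained in the cone chain $|h^a(\partial c)|$. Equation (\ref{eq.h}) applied to $c$ gives the chain identity $c - h^a(\partial c) = \partial h^a(c)$; since $y\mapsto F_{\mu\nu}(f_y)$ is weakly a closed $2$-form by (\ref{0B:2}), the Stokes-type argument in Lemma \ref{lem.0B:6} extends verbatim by linearity from $2$-simplices to $2$-chains and yields
\[
F\left<c,f\right> \;=\; \tfrac{1}{2}\,F_{\mu\nu}\!\bigl(h^a(\partial c)[f]^{\mu\nu}\bigr),
\]
with $h^a(\partial c)[f]^{\mu\nu}$ defined by (\ref{0A:16}); this Schwartz tensor is supported in $|h^a(\partial c)|$. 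Under the hypothesis $|h^a(\partial c)|\perp|h^{a'}(\partial c')|$, strong causality (\ref{0A:2}), in its contracted-tensor form, then gives commutation of the exponentials, i.e.\ (\ref{eq.link}).

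The main technical obstacle I anticipate is invoking strong causality on contracted operators $F_{\mu\nu}(\Phi^{\mu\nu})$ rather than on the uncontracted smearings for which (\ref{0A:2}) is literally stated; this is precisely why Section \ref{0Aa} stipulates that essential self-adjointness is preserved under contraction over $1$- and $2$-tensors, and the commutation of exponentials then follows from standard spectral-calculus arguments applied to commuting essentially self-adjoint operators on a common invariant domain. A secondary technical point is the extension of Lemma \ref{lem.0B:6} from single simplices to chains, which is routine since both the surface integral in (\ref{0B:2o}) and Stokes' identity (\ref{0A:8}) are linear.
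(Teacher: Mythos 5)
Your proof is correct and follows essentially the same route as the paper's: the first part is delegated to Lemma \ref{0B:5} and the equivalence Theorem \ref{0Cc:17}, and the enhanced causality (\ref{eq.link}) is obtained by observing that $c$ and the cone chain $h^{a}(\partial c)$ have the same boundary (your identity $c - h^a(\partial c)=\partial h^a(c)$ is just the explicit form of this), so that Lemma \ref{lem.0B:6}, extended by linearity to chains, lets one rewrite $\rw^{em}(c)$ as $\exp(iF\left<h^{a_0}(\partial c),f\right>)$ and conclude by Lemma \ref{0B:5}$(iii)$. Your explicit verification of condition (a) of Definition \ref{0Cc:14} (the sign flip of the Jacobian under $T$ and the vanishing of the surface element for degenerate affine simplices) fills in details the paper leaves implicit, but does not change the argument.
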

\begin{proof}
According to the Definition \ref{0Cc:14}, the first part of the statement follows from Lemma \ref{0B:5} and Theorem \ref{0Cc:17}.  
Concerning the second statement, we observe that, given   $c=(\si_c,f)\in\Si^a_2$,  the surface $\si_c$ and the  2-chain 
$h^{a_0}(\partial c)$  have the same boundary $\partial c$.  So applying Lemma \ref{lem.0B:6}  we have  
$\rw^{em}(c) \ = \ \exp(i F\left<h^{a_0}(\partial c),f\right>)$,  
and the proof follows by Lemma \ref{0B:5}$(iii)$. 
\end{proof}
The latter property of $\rw^{em}$ stated in the previous theorem is a refinement of the causality used in \cite{CRV} and in Theorem \ref{thm.0B:6}, in fact it does not involve double cones containing $|c| , |c'|$. It is easily seen that (\ref{eq.link}) applies to 2-simplices $c,c'$ whose boundaries are causally disjoint and  form a trivial link, 
so the previous theorem says that $\rw^{em}(c)$ and $\rw^{em}(c')$ commute in accord with the considerations in \cite[Section 1]{Rob77}.

\paragraph{Connection systems and gauge transformations.} We have already pointed out that the combinatorial procedure used to extract the connection system from a 2-cochain is analogous to the analytical one, used to define the primitive 1-form starting from a closed 2-form. Now our aim is to prove that these two procedures are, up to equivalence, the same.
To be precise, we shall compare the connection system associated with $\rw^{em}$ with the one defined by the exponential of the line integral of $A_\mu$: we shall prove that these are the same, up to a gauge transformation in the sense of (\ref{eq.gauge}). \smallskip

Consider the causal and covariant representation $(\lambda^{em},U)$ of the group of loops associated with $(F_{\mu\nu},U)$. This is  defined, according to  (\ref{0Cc:15}), by 
\begin{equation}
\label{0Cc:23}
\lambda^{em}(p) \ = \ \rw^{em}(h^a(b_n))\cdots \rw^{em}(h^a(b_1)) \ , \qquad  p=b_n\cdots b_1:a\to a.     
\end{equation}

\begin{remark}
It is worth observing that Theorem \ref{0Cc:22} says that $\lambda^{em}$ is \emph{localized on loops which are path-boundaries of 2-simplices}:  
given two loops $\boldsymbol{\partial}c$ and  $\boldsymbol{{\partial}}c'$, $c,c'\in\Si_2$, having causally disjoint supports and forming a trivial link, 
from Theorem \ref{0Cc:22} we obtain 
\begin{equation}
\label{0Cc:24}
[\lambda^{em}(\boldsymbol{\partial} c),\lambda^{em}(\boldsymbol{\partial} c')] =0 \   . 
\end{equation}
In fact, since the loops are not linked together, it is possible to find $a$ and $a'$ as in the recalled theorem and  $\lambda^{em}(\boldsymbol{\partial} c)=\rw^{em}(c)$ by definition.  
Notice that in the case of  free fields  the relation  (\ref{0Cc:24}) should also hold for arbitrary loops,  because the composition of two Weyl operators gives the sum of the exponents up to a phase factor. So, if $p$ is the boundary of a 2-surface, then $\lambda^{em}(p)$ corresponds to the exponential of the integral of the fields on this surface up to some phase factor. 
Since phase factors do not affect commutativity,  (\ref{0Cc:24}) should hold. 
\end{remark}

Let $\cP$ denote a covariant path-frame system. We consider the causal and covariant connection system $(\ru^{em}_\cP,U)$ associated with $(\lambda^{em},U)$ which is defined, according to (\ref{0Cc:7}), by   
\begin{equation}
\label{0Cc:25}
  \ru^{em}_{\cP_a}(b) \ = \ \lambda^{em}(\overline{p}_{(a,\partial_0b)} \, b \, p_{(a,\partial_1b)})
  \ , \qquad 
  a \in \Si_0 \ , \ b\in\Si_1^a \ . 
\end{equation}
As observed in Subsection \ref{Cc},  a changing of the path-frame system leads to an equivalent connection system by means of a gauge transformation. In particular, considering the Euclidean path-frame system $\cE$ and a generic one $\cP$ we have 
\begin{equation}
\label{0Cc:25a}
\rg_a(\partial_0b)\,  \ru^{em}_{\cE_a}(b) \ = \ \ru^{em}_{\cP_a}(b) \, \rg_a(\partial_1b) \ , \qquad b\in \Si_1^a \ ,  
\end{equation}
where 
\begin{equation}
\label{0Cc:25b}
 \rg_a(a') \ := \ \lambda^{em}\left(p_{(a,a')} \overline{e}_{(a,a')}\right) \ , \qquad  a'\in\Si^a_0 \ .
\end{equation}
Concerning the connection system $\ru^{em}_{\cE}$,   since $h^a(e_{(a,\partial_i b)})$, $i=0,1$, is a degenerated 2-simplex,  
applying (\ref{0Cc:23}) and (\ref{0Cc:25}) we find
$\lambda^{em}(\overline{e}_{(a,\partial_0b)} \, b \, e_{(a,\partial_1b)}) = \rw^{em}(h^a(b))$  and this implies
\begin{equation}
\label{eq.uF}
\ru^{em}_{\cE_a}(b) \ = \ \exp(i F\left< h^{a_0}(r_b),f\right>) \ , \qquad b\in\Si^a_1 \ . 
\end{equation}
We now construct a connection system starting from the potential system $A_\mu$. 
We begin by  defining  a  family $\ru^{pot}:=\{\ru^{pot}_a : \Si_1^a \to \cU\cH\}$  of unitary 1-cochains, for $a = (a_0;f)\in\Si_0$, as
\begin{equation}
\ru^{pot}_a(b) \ := \ \exp \left(i A^{a_0}\left<r_b,f\right> \right) \ , \qquad   b=(r_b,f)\in\Si_1^a \  . 
\end{equation}
It is possible to check directly that $(\ru^{pot},U)$ is a causal and covariant connection system, but we do not need to do that
because $\ru^{pot} = \ru^{em}_\cE$. In fact, Proposition \ref{0Bb:16} implies
$A^{a_0}\left<r_b,f\right> = F\left<h^{a_0}(r_b), f\right>$,
so by (\ref{eq.uF}) we have $\ru^{em}_\cE  =  \ru^{pot}$ as desired.
In conclusion, the equation \ref{0Cc:25a} gives:
\begin{theorem}
Let $(F_{\mu\nu},U)$ denote the quantum electromagnetic field.
Under the above notations, for any covariant path-frame system $\cP$ we have 
\begin{equation}
(\ru^{em}_{\cP},U) \ \cong \ (\ru^{em}_{\cE},U)  \ = \  (\ru^{pot},U) \ , 
\end{equation}
where $\cE$ is the Euclidean path-frame system and the equivalence $\cong$ is realized by the gauge transformation (\ref{0Cc:25b}).  
\end{theorem}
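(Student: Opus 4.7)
The proof essentially consists of assembling the pieces already developed. My plan is to split the theorem into two claims: first the identification $\ru^{em}_{\cE} = \ru^{pot}$ as connection systems, and then the gauge equivalence $(\ru^{em}_{\cP},U) \cong (\ru^{em}_{\cE},U)$ realized by the family $\rg$ of (\ref{0Cc:25b}).

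For the identification, I would start from an arbitrary $b=(r_b,f)\in\Si^a_1$ with $a=(a_0,f)$ and unwind the definitions. On the one hand (\ref{eq.uF}), which is itself a short consequence of (\ref{0Cc:23}) and (\ref{0Cc:25}) together with the fact that $h^a(e_{(a,\partial_i b)})$ is a degenerate 2-simplex and hence sent to the identity by $\rw^{em}$, gives $\ru^{em}_{\cE_a}(b)=\exp(iF\langle h^{a_0}(r_b),f\rangle)$. On the other hand, the definition of $\ru^{pot}_a$ reads $\ru^{pot}_a(b)=\exp(iA^{a_0}\langle r_b,f\rangle)$. Proposition \ref{0Bb:16} yields exactly $A^{a_0}\langle r_b,f\rangle=F\langle h^{a_0}(r_b),f\rangle$, so the two exponents coincide as essentially self-adjoint operators on $\cD$, and the two 1-cochains agree. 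This is the ``combinatorial vs analytic primitive'' step and requires no new computation.

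For the gauge equivalence, I would invoke the general construction made in Subsection \ref{Cc}: given a causal and covariant representation $\lambda$ of the loop group and two covariant path-frame systems $\cP$ and $\cQ$, formula (\ref{0Cc:10}) produces a gauge transformation intertwining $\ru^\lambda_\cP$ and $\ru^\lambda_\cQ$. I apply this with $\lambda=\lambda^{em}$, $\cQ=\cE$ the Euclidean path-frame system, and $\cP$ the given path-frame system, obtaining precisely the family $\rg_a(a')=\lambda^{em}(p_{(a,a')}\overline{e}_{(a,a')})$ of (\ref{0Cc:25b}). Covariance of $\rg$ (property (a) of a gauge transformation) follows from covariance of $\cP$, of $\cE$, and of $\lambda^{em}$, while the intertwining relation (\ref{0Cc:25a}) is a straightforward manipulation of reduced words under $\lambda^{em}$, analogous to the calculation already carried out in Lemma \ref{0Cc:6}.

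The one step that needs a little care is property (b) of a gauge transformation, namely that $\ad_{\rg_a(a)}$ preserves $\ru^{em}(\cA_{\tilde a})$ for $\tilde a\supseteq a$. However, with $\cQ=\cE$ one has $p_{(a,a)}=e_a=e_{(a,a)}$, hence $\rg_a(a)=\lambda^{em}(e_a\,\overline{e}_a)=\mathbbm{1}$, which makes the condition trivial; this is precisely the remark made at the end of Subsection \ref{Cc}. I expect this to be the only ``delicate'' point, since everything else is bookkeeping. Putting the two claims together yields the chain $(\ru^{em}_{\cP},U)\cong(\ru^{em}_{\cE},U)=(\ru^{pot},U)$ and concludes the proof.
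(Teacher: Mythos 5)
Your proposal is correct and follows essentially the same route as the paper: the identity $\ru^{em}_{\cE}=\ru^{pot}$ is obtained, exactly as in the text, by combining (\ref{eq.uF}) with Proposition \ref{0Bb:16}, and the equivalence $(\ru^{em}_{\cP},U)\cong(\ru^{em}_{\cE},U)$ is the path-frame-change gauge transformation of Subsection \ref{Cc} together with the observation that $\rg_a(a)=\mathbbm{1}$ trivializes property (b). The only nit is a labeling inversion: to land exactly on the family (\ref{0Cc:25b}) satisfying (\ref{0Cc:25a}) you should apply (\ref{0Cc:10}) with $\cE$ as the frame being transformed and $\cP$ as the target (your choice $\cQ=\cE$ yields the inverse family), which is immaterial for the equivalence since gauge transformations form a group.
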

In words, the abstract connection system $\ru^{em}_\cP$  coincides with the one obtained as the exponential of the line integral of $A_\mu$, 
up to gauge transformations.
\begin{remark}
Two observation are in order. 
\begin{enumerate}
\item If $\rg = \{ \rg^z \}_{z \in \bR^4} \in \cG$ is a gauge transformation of $A_\mu$ in the sense of  Subsection \ref{0Bc}, then setting
\[
\rg^\ru_a(a') \ := \ \exp ( i \rg^{a_0}(a'_0,f) ) \ , \qquad a'=(a'_0,f) \in \Si_0^a \ ,
\]
we easily find that $\rg^\ru$ is a gauge transformation of $\ru^{pot}$ leaving any local algebra $\cA_o$, $o \in K$, pointwise fixed.
\item Let $a = (a_0;f) \in \Si_1$ and $b,b' \in \Si_1^a$ with $\partial_1b_1 = \partial_0b'$; then the composition
of the corresponding lines $r_b$ and $r_{b'}$ yields  the curve $r_b*r_{b'}$.
It is worth to stress that, also in the simpler case of \emph{free} electromagnetic field, the operator 
$\exp\left(iA^{a_0}\left<r_b*r_{b'},f\right> \right)$ is different from the one obtained by the combinatorial product of the two single paths,
\[
\ru^{pot}(b b') \ = \ 
\ru^{pot}(b) \ru^{pot}(b') \ = \ 
\exp \left(i A^{a_0}\left<r_b,f\right> \right) \cdot \exp \left(i A^{a_0}\left<r_{b'},f\right> \right) \ .
\]
Hence, path composition is not preserved under the quantization process.
\end{enumerate}
\end{remark}

\section{Concluding remarks}
In a previous paper \cite{CRV} we used loops in a globally hyperbolic space-time to generate a causal and covariant net of $\rC^*$-algebras, called  the net of causal loops.
We presented some of its representations in terms of connection systems, i.e.\ families of ``abstract"  connections fulfilling causality and covariance as properties of the family and not of a single connection. 
Local gauge transformations were defined as maps between equivalent connection systems,  leaving  element-wise invariant the loop algebras.  
The connections are recovered  by the representations using a path-frame: a choice of paths joining any point of the space-time with a fixed point, the pole. 
Letting varying the pole yields  the connection system. \smallskip 

In this paper, starting  only from the (possibly charged) quantum electromagnetic field $F_{\mu\nu}$,
we reconstructed a potential 1-form $A^z_\mu$ with reference to the pole $z$,  i.e.\ the center of a contracting homotopy of the Minkowski space-time. Actually this homotopy  corresponds to the abstract path-frame used to define a connection.  According to different choices of the pole $z$, a potential system $A_\mu=\{A^z_\mu,\, z\in \bR^4 \}$ is obtained, and a series of outcomes that agree with the abstract formulation given by the authors in \cite{CRV} follows. 
In particular, $A_\mu$ gives  a  connection system and, in turns, a covariant representation of the net of causal loops. Furthermore,  \emph{local} gauge transformations defined in terms of the potential system $A_\mu$ coincide with the ones defined in terms of the corresponding connection system. 
\smallskip

For these results, an important outlook  is the comparison with the global conditions defining the charge classes for theories with long-range interaction developed in \cite{BdR}. 
In that paper a result of DR-duality type is obtained for the case of simple charge classes, giving a \emph{global} Abelian gauge group. In particular, motivated by sound physical reasons, the charge classes are defined over a time-like cone in the Minkowski space-time, using a family of its subsets called hypercones.
\smallskip

To better understand these dual aspects of long-range interactions and local gauge theory on loops, it seems to be useful to study the geometrical facets of the two cases, e.g.\ the choice of a time-like cone with its apex and hypercones on the one side, and of a path-frame with its pole on the other side. 

In this regard, a hopeful hang is the realization for QED  of an abstract connection in terms of the charge transporter proposed in equation (\ref{due}), in fact the presence of the charged massive field $\psi$ and of the massless potential form $A^z_\mu$ should make possible to shift charges to infinity.
Compatibly with the two frameworks, the pole $z$ may reveal the choice of a future time-like cone.  

\smallskip
Under a different perspective, it may be of interest to explore the relation between the approach of this paper and others, e.g.\ TQFT,
referring to the interpretation of observables localized on loops.

{\small

}

\end{document}